\documentclass[journal]{IEEEtran}
\usepackage[dvips]{graphicx}
\graphicspath{{./eps/}}
\DeclareGraphicsExtensions{.eps}
\usepackage{subfigure}

\usepackage{mathrsfs}
\usepackage{times}
\usepackage{bm}
\usepackage{subfigure}
\usepackage{color}
\usepackage{amsfonts}
\usepackage{amssymb}
\usepackage{graphicx}
\usepackage{algorithm}
\usepackage{algorithmic}
\usepackage{color}
\usepackage{multirow}
\usepackage{dsfont}
\usepackage{amsmath}
\usepackage{amsthm}
\usepackage{amsfonts}
\usepackage{amssymb}
\newtheorem{lemma}{Lemma}
\newtheorem{theorem}{Theorem}

\newtheorem{remark}{Remark}

\IEEEoverridecommandlockouts

\usepackage{algorithmic}
\usepackage{array}
\usepackage{threeparttable}
\usepackage{enumerate}
\usepackage{balance}
\hyphenation{op-tical net-works semi-conduc-tor}
\begin{document}
	\title{Channel Estimation in Massive MIMO Systems with Orthogonal Delay-Doppler Division Multiplexing
	}
\author
{\IEEEauthorblockN{{Dezhi Wang, Chongwen Huang, Xiaojun Yuan, Sami Muhaidat, Lei Liu, Xiaoming Chen, 
Zhaoyang Zhang, \\Chau Yuen, \IEEEmembership{Fellow,~IEEE}, and M{\'e}rouane  Debbah}, \IEEEmembership{Fellow,~IEEE}}\\

 \vspace{-2em}
\IEEEcompsocitemizethanks{
\IEEEcompsocthanksitem 
D.~Wang is with the QianYuan National Laboratory and the College of Information Science and Electronic Engineering, Zhejiang University, Hangzhou, China. (E-mail: dz\_wang@zju.edu.cn)\protect
\protect
\IEEEcompsocthanksitem C. Huang, L. Liu, X. Chen, and Z. Zhang are with the College of Information Science and Electronic Engineering, Zhejiang University, Hangzhou 310027, China. (E-mails:\{chongwenhuang, lei\_liu, chen\_xiaoming, ning\_ming\}@zju.edu.cn). 
\protect 
\IEEEcompsocthanksitem X. Yuan is with the
National Key Laboratory of Wireless Communications, University of Electronic Science and Technology of China, Chengdu 611731, China (E-mail: xjyuan@uestc.edu.cn).\protect
\IEEEcompsocthanksitem S. Muhaidat is with the Department of Electrical Engineering
and Computer Science, Khalifa University, Abu Dhabi, UAE
(e-mail: sami.muhaidat@ku.ac.ae).\protect
\IEEEcompsocthanksitem C. Yuen is with the School of Electrical and Electronics Engineering, Nanyang Technological University, Singapore. (E-mail:
chau.yuen@ntu.edu.sg) \protect
    \IEEEcompsocthanksitem M. Debbah is with the Department of Electrical Engineering and Computer
Science and the KU 6G Center, Khalifa University, Abu Dhabi 127788, UAE,
and also with CentraleSupelec, University Paris-Saclay, 91192 Gif-sur-Yvette,
France (E-mail: Merouane.Debbah@ku.ac.ae).
	}

}
\maketitle
\begin{abstract}
Orthogonal delay-Doppler division multiplexing~(ODDM) modulation has recently been regarded as a promising technology to provide reliable communications in high-mobility situations. Accurate and low-complexity channel estimation is one of the most critical challenges for massive multiple input multiple output~(MIMO) ODDM systems, mainly due to the extremely large antenna arrays and high-mobility environments. To overcome these challenges, this paper addresses the issue of channel estimation in downlink massive MIMO-ODDM systems and proposes a low-complexity algorithm based on memory approximate message passing~(MAMP) to estimate the channel state information~(CSI). Specifically, we first establish the effective channel model of the massive MIMO-ODDM systems, where the magnitudes of the elements in the equivalent channel vector follow a Bernoulli-Gaussian distribution. Further, as the number of antennas grows, the elements in the equivalent coefficient matrix tend to become completely random. Leveraging these characteristics, we utilize the MAMP method to determine the gains, delays, and Doppler effects of the multi-path channel, while the channel angles are estimated through the discrete Fourier transform method. Finally, numerical results show that the proposed channel estimation algorithm approaches the Bayesian optimal results when the number of antennas tends to infinity and improves the channel estimation accuracy by about 30\% compared with the existing algorithms in terms of the normalized mean square error. 
\end{abstract}

\begin{IEEEkeywords}
Channel estimation, massive multiple input multiple output, orthogonal delay-Doppler division multiplexing, memory approximate message passing, discrete Fourier transform 
\end{IEEEkeywords}

\section{Introduction}
With the rapid advancement in wireless communications technology, sixth-generation~(6G) mobile systems are expected to deliver reliable communications across high-mobility scenarios, including high-speed railways~(HSR), unmanned aerial vehicles~(UAV), and low earth orbit satellites~\cite{chen20235g,masaracchia2021uav,ahn2024sensing}. In the fourth-generation~(4G) and fifth-generation~(5G) mobile networks, orthogonal frequency division multiplexing~(OFDM) is widely utilized due to its efficiency and robustness~\cite{hwang2008ofdm,zhang2020joint}. However, in high-mobility scenarios, the Doppler spread associated with time-variant channels leads to severe inter-carrier interference~(ICI)~\cite{khan2024mobility}. This ICI phenomenon significantly degrades the performance of OFDM systems, posing critical challenges for maintaining reliable communication links~\cite{wang2006performance,nagaraj2024inter}.

 To overcome the above challenges, orthogonal time frequency space~(OTFS) multi-carrier~(MC) modulation has been introduced in~\cite{hadani2017orthogonal}, and it has gained significant attention due to its significant benefits in time-varying channels. During the OTFS modulation, the signal in the delay-Doppler~(DD) domain is transformed and mapped to the time frequency~(TF) domain through the inverse symplectic finite Fourier transform~(ISFFT). Subsequently, it is transmitted by utilizing a rectangular pulse in the TF plane, which ensures orthogonality. Consequently, OTFS can be regarded as a form of precoded OFDM. Nevertheless, in practical scenarios, the ideal TF plane pulse described in theory cannot be obtained~\cite{ronny2018orthogonal}. To overcome this limitation and realize an orthogonal pulse in the DD plane, Lin et al.~\cite{lin2022orthogonal} introduced a new multi-carrier (MC) modulation scheme,i.e., orthogonal delay-Doppler division multiplexing (ODDM) modulation. This approach operates directly in the DD domain. By doing so, ODDM modulation effectively circumvents the drawbacks associated with OTFS, offering a more practical alternative for signal transmission in communication systems~\cite{lin2023multi,wang2020exploring}.
 
 In addition, compared with the 5G-NR systems, ODDM modulation operates directly in the DD domain by designing a series of pulse trains, which can overcome the pronounced ICI for higher modulation formats (e.g., 16-QAM). Besides, the ODDM modulation simplified the signal processing in the DD domain, reducing the complexity of data detection compared to the OTFS modulation. At the same time, the ODDM modulation may introduce some additional computational complexity in the transformation between the DD domain and the TF domain compared with OFDM, the performance gain in terms of BER, spectral efficiency, and system capacity in high-speed scenarios far outweighs this additional complexity. Based on the above, we will introduce the literature review and background work about the channel estimation for massive multiple-input multiple-output~(MIMO)-ODDM systems.

\subsection{Literature Review and Background Work}
\subsubsection{Massive  MIMO}
Massive MIMO has been regarded as a promising technology for 5G and future 6G communication systems due to its significant enhancement of spectrum and power efficiency~\cite{gan2022multiple,wei2022multi,Cui2022channel}. To reach the potential of the massive MIMO systems, obtaining accurate channel state information~(CSI) is essential~\cite{wei2021channel,qin2018time,gan2021ris}. However, with the increasing number of antennas, it is challenging to estimate the channel information accurately.  In recent years, some algorithms have been proposed for channel estimation problems in massive MIMO systems. 
Huang \emph{et al.}~\cite{huang2021mimo} considered the millimeter wave~(mmWave) channel estimation utilizing MIMO radar and utilized deep learning~(DL) method for uplink mmWave multi-user MIMO communications. Balevi~\emph{et al.}~\cite{balevi2020massive} considered the multi-cell interference-limited massive MIMO systems and proposed a DL-based channel estimation method, which was close to the performance of the minimum mean square error~(MMSE) estimator.  Han~\emph{et al.}~\cite{han2020channel} focused on the issue of channel estimation, and they explained the non-stationary channel by establishing a connection between subarrays and scatters in the extremely large-scale massive MIMO~(XL-MIMO) systems. Huang~\emph{et al.}~\cite{huang2018iterative} considered the channel estimation challenge in mmWave MIMO systems and proposed an iterative algorithm that combines least squares estimation with a sparse message passing algorithm. In addition, considering that the near-field regions cannot be ignored in XL-MIMO systems, Cui~\emph{et al.}~\cite{cui2022near} employed the polar-domain sparsity to estimate the near-field channel assuming spherical-wavefront property.

\subsubsection{Channel Estimation for Massive MIMO Systems}
Considering the advantages of the massive MIMO technology, it is natural to integrate it into the ODDM system to improve performance. Similar to the MIMO-OTFS and MIMO-OFDM  systems, accurate channel estimation is challenging in massive MIMO-ODDM systems due to the high-dimensional channel parameters~\cite{lin2021channel,meng2024joint,zhang2024sparse,wen2024MF}.  Existing work mainly focuses on the channel estimation problem in massive MIMO-OFDM and massive MIMO-OTFS systems.   Zhao~\emph{et al.}~\cite{zhao2020aoa} investigated OFDM communication systems on HSR, designing two feasible channel estimators including uplink and downlink for OFDM systems with massive antenna arrays located at the base station~(BS). Shen~\emph{et al.}~\cite{shen2019channel} proposed a 3D-structured orthogonal matching pursuit~(3D-OMP) algorithm for estimating the sparse downlink channel information in massive MIMO-OTFS systems, which can attain accurate CSI while minimizing pilot overhead. 
Shi~\emph{et al.}~\cite{shi2021deterministic} introduced a novel CSI acquisition algorithm specifically designed for downlink massive MIMO-OTFS in the presence of the fractional Doppler along with the design of deterministic pilots. Wei~\emph{et al.}~\cite{wei2022off} addressed the channel spreading issues arising from fractional delay and Doppler shifts, introducing an off-grid channel estimation approach that utilizes a sparse Bayesian learning scheme. Different from the OFDM and OTFS systems, Chi~\emph{et al.}~\cite{chi2024interleave} studied the interleave frequency division multiplexing~(IFDM) systems and proposed a detection algorithm based on the memory approximate message passing~(MAMP) scheme. In addition, there are also some researches focusing on the ODDM modulation, Wang~\emph{et al.}~\cite{wang2022joint} proposed a joint channel estimation and data detection algorithm for OTFS-based LEO satellite communication. In addition, a few works focus on channel estimation or signal detection problems in the ODDM systems. Wang~\emph{et al.}~\cite{wang2024exploring} considered the channel estimation and symbol detection problem in ODDM-based integrated sensing and communication~(ISAC) systems and proposed a low-complexity algorithm to estimate the CSI. Cheng~\emph{et al.}~\cite{cheng2024mimo} considered the signal detection problem in the MIMO-MIMO systems and proposed a spatial-based generative adversarial network to detect the signals, which can achieve precise, interference-resistant, and environmentally robust performance.

\subsection{Motivation and Contributions}
However, the existing research may not fully consider the importance of computational complexity in channel estimation problems, particularly in the context of time-varying channels. In high-mobility scenarios, where the CSI fluctuates rapidly, the ability to quickly and efficiently estimate the CSI becomes essential. Therefore, addressing computational complexity is critical for accurate and timely channel estimation in high-mobility scenarios. To overcome the challenges, the MAMP scheme has been proposed to solve the sparse signal recovery problem with low computational complexity~\cite{liu2022memory,chen2023memory,ge2023low,chi2024interleave}, where the techniques of long memory and orthogonalization are used to achieve the potentially Bayes-optimal~\cite{liu2024capacity}. This paper focuses on addressing the channel estimation problem in massive MIMO-ODDM systems based on the MAMP scheme, the main contributions are summarized as follows:
\begin{itemize}
    \item First, we establish the effective channel model for the massive MIMO-ODDM systems, which demonstrates that the equivalent channel vector exhibits sparse characteristics and the magnitudes of its elements are modeled to comply with a Bernoulli-Gaussian distribution.  Furthermore, we provide insights into the behavior of the equivalent coefficient matrix, revealing that as the number of antennas approaches infinity, its elements can be treated as entirely random.
    \item Then, to solve the high dimensional channel estimation problem, we develop a two-stage algorithm with low computational complexity based on the characteristics of the effective channel model. Specifically, the delay, Doppler, and gains of the multi-path channel are calculated by employing MAMP scheme. Meanwhile, the discrete Fourier transforms~(DFT) scheme is utilized to estimate the angles. The proposed algorithm can achieve the Bayes optimal mean square error~(MSE).
    \item Finally, the effectiveness of the proposed algorithm is validated via the numerical results. The proposed algorithm converges to the Bayesian optimal results when the number of antennas increases towards infinity and exhibits a strong bit-error-ratio~(BER) performance in comparison to OTFS modulation. In addition, the proposed algorithm enhances channel estimation accuracy by about 30\% when compared to the existing algorithms.
\end{itemize} 
\begin{figure*}[htbp]
	{	\centering\includegraphics[width=0.9\textwidth]{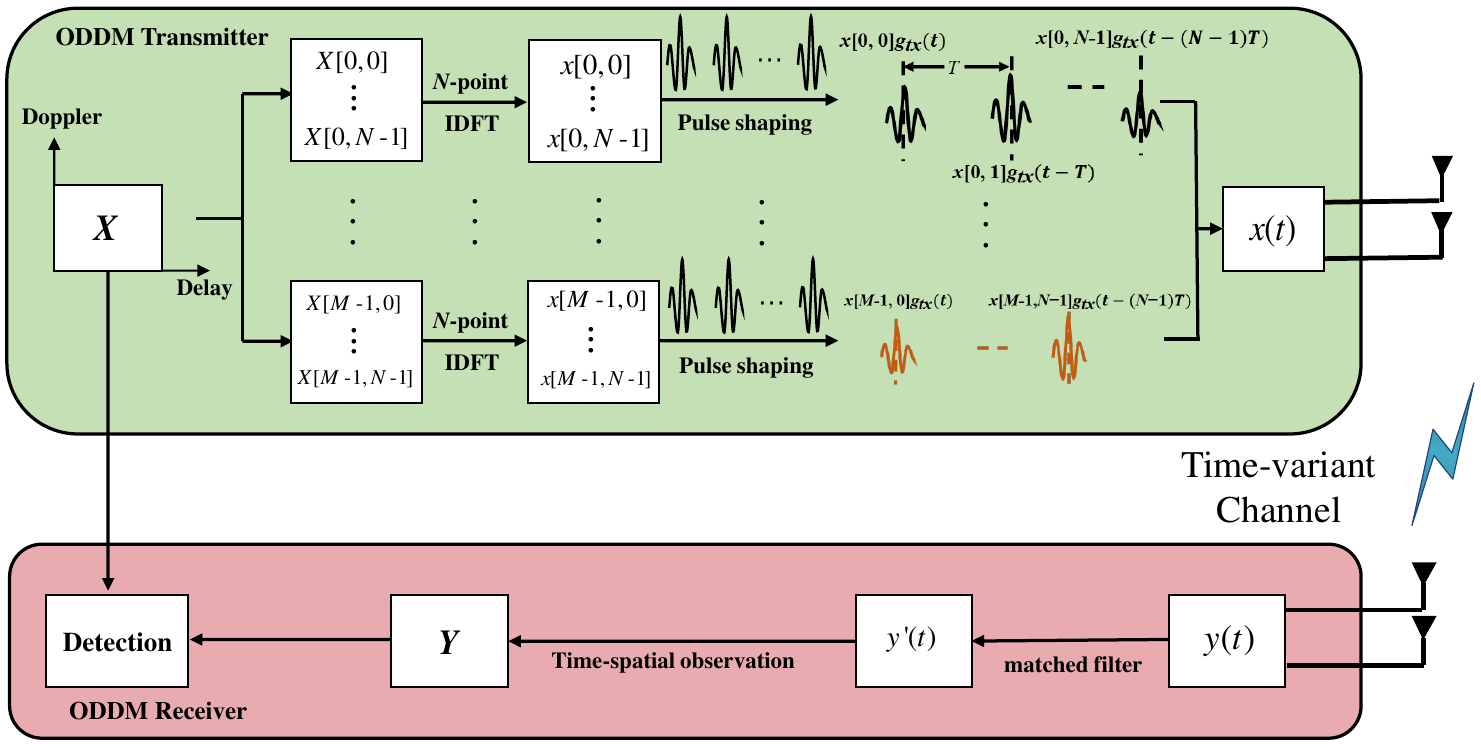}
		\caption{Graphic illustration of ODDM modulation and demodulation}
\label{fig:ODDM_modulation}
	}
\end{figure*}
\vspace{0.5em}
The remainder of the paper is structured as follows: Section II introduces the system model of massive MIMO-ODDM systems, while Section III formulates the channel estimation problem. In Section IV, we present the proposed two-stage low-complexity channel estimation algorithm. Section V offers the numerical results, and finally, Section VI concludes the paper.
\section{System Model}
In this section, we first introduce the physical model of the time-variant channel, and then the ODDM modulation and demodulation process in the SISO systems are presented. Finally, we expand the ODDM modulation in the massive MIMO systems.  As shown in Fig.~\ref{fig:ODDM_modulation}, the ODDM is a DD plane multi-carrier modulation, which is essentially an upsampled staggered multitone~(SMT) modulation. The transmit pulse is orthogonal w.r.t. the DD plane in the ODDM modulation, which can achieve better performance in high-mobility scenarios.

\subsection{Physical Model}
In the time-variant scenarios, define  $W$ and $T$ as the sampling rate and time duration of the received signal, respectively. Thus, the delay and Doppler resolutions are $1/W$ and $1/T$, respectively. The equivalent sampled delay-Doppler domain channel can be written as 
\begin{eqnarray}
h(\tau,\nu)=\sum_{p=1}^Ph_p\delta(\tau-\tau_p)\delta(\nu-\nu_p),
\end{eqnarray}
where $P$ represents the number of paths in the channels, $h_p$, $\tau_p\overset{\triangle}{=}k_p\frac{1}{T}$ and $\nu_p\overset{\triangle}{=}l_p\frac{1}{W}$ are the channel gain, the delay and the Doppler of $p$th path, respectively, where  $k_p,l_p\in\mathbb{Z}$.

\subsection{SISO ODDM Modulation and Demodulation}
In the ODDM modulation, the information is mapped into a $M\times N$ matrix of data symbols within the DD domain, where $M$ represents the number of time slots and  $N$ denotes the number of subcarriers. We define $X[m,n]$ as the data symbols for transmission with $m\in\{0,1,\cdots,M-1\}$ and $n\in\{0,1,\cdots,N-1\}$ indicating the delay and Doppler indices, respectively. Consequently, there are $MN$ data symbols to be modulated. Each MC symbol contains $N$ subcarriers. In the DD domain, the inter-subcarrier spacing is set as $1/NT$. The MC symbols are evenly spaced at intervals of $T/M$, corresponding to the delay resolution. The process of ODDM modulation is shown in~Fig.~\ref{fig:ODDM_modulation}, which is expressed as 
 \begin{eqnarray}\label{ODDM_modulation}
x(t)=\sum_{m=0}^{M-1}\sum_{n=0}^{N-1}X[m,n]g_{tx}\Big(t-m\frac{T}{M}\Big)e^{j2\pi\frac{n}{NT}\Big(t-m\frac{T}{M}\Big)},
 \end{eqnarray}
 where $g_{tx}(t)$ represents the filter at the transmitter, which is a pulse train and expressed as $g_{tx}(t)=\sum_{\hat{n}=0}^{N-1}a(t-\hat{n}T)$. In this expression, $a(t)$ is a real-valued square-root Nyquist pulse with time symmetry. It spans a time duration of $2QT/M$ and adhere the condition $ \int_{-\infty}^{\infty}|a(t)|^2dt=\frac{1}{N}$. Then, we have $\int_{-\infty}^{\infty}|g_{tx}(t)|^2dt=1$. When the integer $2Q\ll M$, it can be proved that $g_{tx}(t)$ is orthogonal w.r.t. the delay and Doppler resolutions, i.e., 
 \begin{eqnarray}\label{orthogonality}
   && A_{g_{tx},g_{tx}}=\int g_{tx}^*(t)g_{tx}\left(t-\frac{T}{M}\right)e^{-j\frac{2\pi n}{NT}\big(t-m\frac{T}{M}\big)}dt \nonumber\\
   &&~~~~~~~~~=\delta(m)\delta(n).
 \end{eqnarray}
 The related proof can be found in~\cite{lin2022orthogonal}.   

 When the transmission signal $x(t)$ passes the time-varying multi-path channel, the received signal at the user from the BS is 
 \begin{eqnarray}
&&y(t)=\sum_{p=1}^Ph_px(t-\tau_p)e^{j2\pi \nu_p(t-\tau_p)}+n_c(t)\nonumber\\
&&=\sum_{p=1}^P\sum_{m=1}^M\sum_{n=1}^Nh_pX[m,n]g_{tx}\left(t-(m+k_p)\frac{T}{M}\right)\nonumber\\
&&~~~\times e^{j2\pi\frac{n+l_p}{NT}\big(t-(m+k_p)\frac{T}{M}\big)e^{j2\pi\frac{l_pm}{MN}}}+n_c(t),
 \end{eqnarray}
 where $n_c(t)\in\mathbb{C}$ is the noise, which follows the complex Gaussian distribution $\mathcal{CN}(0,\delta^2)$, $k_p=\tau_p\frac{M}{T}$ and $l_p=\nu_p NT$ are the delay and the Doppler indexes, respectively, and we assume the channel gain $h_p$ also follows the Gaussian distribution\footnote{In this paper, we consider the channel gain follows a  Gaussian distribution instead of a complex Gaussian distribution. The main reason is that in the case of complex distribution, there are no closed-form solutions for its posterior mean and variance, which increases the difficulty of subsequent calculations.}.

In the ODDM demodulation, the cross-ambiguity function between the received waveform $g_{rx}(t)$ and the received signal $y(t)$ is calculated by using the matched filter as
 \begin{eqnarray}
     A_{g_{rx},y(t)}(\tau,v)=\int e^{-j2\pi v(t-\tau)}g_{rx}^*(t-\tau)y(t)dt,
 \end{eqnarray}

Then, we can derive the discrete form of the received signal
\begin{eqnarray}\label{initial_received_signal}
&&Y[m,n]=A_{g_{rx},y(t)}(\tau,v)|_{\tau=m\frac{T}{M},v=n\frac{1}{NT}}\nonumber\\
&&=\int e^{-j2\pi n\frac{1}{NT}\big(t-m\frac{T}{M}\big)}g_{rx}^*\Big(t-m\frac{T}{M}\Big)y(t)dt\nonumber\\
&&=\left\{
\begin{matrix}
\begin{aligned}
&\sum_ph_pX[m-k_p,\hat{n}]e^{j2\pi\frac{l_p}{MN}(m-k_p)}, \\
&~~~~~~~~~~~~~~~~~~~~~\textrm{if}~m-k_p\geq 0,\\ 
&\sum_ph_pX[M+m-k_p,\hat{n}]e^{j2\pi\left(\frac{l_p}{MN}(m-k_p)-\frac{\hat{n}}{N}\right)}, \\ 
&~~~~~~~~~~~~~~~~~~~~~\textrm{if}~m-k_p<0,
\end{aligned}
\end{matrix}
\right.\\
\end{eqnarray}
where $n'=[n-l_p]_N$. The derivation process can be seen in Appendix~\ref{Section_IO_relation}.

Moreover, we utilize $d_1$ and $d_2$ to denote the delay and Doppler indexes, respectively, with $d_1\in[0,L-1]$ and $d_2\in[-K,K]$, where $L$ and $K$ represent the maximum delay and Doppler indexes, respectively. Therefore, the received signal can be rewritten as 
\begin{eqnarray}
&&Y[m,n]=\sum_{d_1=0}^{L-1}\sum_{d_2=-K}^{K}h_{d_1,d_2}\hat{X}[m'',n'']e^{j2\pi\frac{d_2(m-d_1)}{MN}}\nonumber\\
&&~~~~~~~~~~+N_c[m,n],\nonumber\\
&&=\left\{
\begin{matrix}
\begin{aligned}
&\sum_{d_1=0}^{L-1}\sum_{d_2=-K}^{K}X[m'',n'']H_{d_1,d_2}^{\textrm{DDS}}+N_c[m,n],\\
&~~~~~~~~~\textrm{if}~m''=m-d_1\geq 0, \\ 
&\sum_{d_1=0}^{L-1}\sum_{d_2=-K}^{K}X[m''+M,n'']e^{-j2\pi\frac{n''}{N}}H_{d_1,d_2}^{\textrm{DDS}}\\
&~+N_c[m,n],~~~\textrm{if}~m''=m-d_1< 0,
\end{aligned}
\end{matrix}
\right.
\end{eqnarray}
where $H_{d_1,d_2}^{\textrm{DDS}}=h_{d_1,d_2}e^{j2\pi\frac{d_2}{MN}(m-d_1)}$ is the delay-Doppler spatial~(DDS) channel, $h_{d_1,d_2}$ is the gain corresponding to the path delay $d_1$ and the Doppler shift $d_2$\footnote{For a time-varying channel model, the channel gain $h_p$ at a particular time instant can be related to $h_{d_1,d_2}$ by convolving the impulse response of the channel with a function that models the Doppler shift and delay.}, and $n''=[n-d_2]_N$.
\subsection{Massive MIMO-ODDM Systems}
In massive MIMO-ODDM systems,  the BS employs $N_t$ uniform linear array~(ULA) antennas to serve multiple single-antenna users. For simplicity, we focus on a single user and ignore the influence of channels from other users~\cite{shi2021deterministic}. Consequently, the received signal for the user can be represented as
 \begin{eqnarray}\label{received_siganl}
\bm{y}(t)=\sum_{p=1}^Ph_px(t-\tau_p)e^{j2\pi \nu_p(t-\tau_p)}\bm{a}_{N_t}(\theta_p)+\bm{n}_c(t),
 \end{eqnarray}
 where $\bm{n}_c(t)\in\mathbb{C}^{N_t\times 1}\in\mathcal{CN}(0,\delta^2\bm{I})$ denotes the noise vector, $\bm{a}_{N_t}(\theta_p)=\left[1,e^{j2\pi\alpha_{p}},\cdots,e^{j2\pi(N_t-1)\alpha_{p}}\right]^T, \forall p$ is the transmit array steering vectors, where $\alpha_p=d_{\rm BS}\sin(\theta_p)/\lambda_c$, $\theta_p$, $d_{\rm BS}$, and $\lambda_c$ present physical angle of departure of path $p$, the antenna spacing and the wavelength of the carrier frequency, respectively.  

Similarly, the discrete form of the received signal for the massive MIMO-ODDM systems is expressed as
\begin{eqnarray}\label{IO_relationship}
&&Y[m,n]=\sum_{n_t=0}^{N_t-1}\sum_{d_1=0}^{L-1}\sum_{d_2=-K}^{K}h_{d_1,d_2}\hat{X}_{n_t}[m'',n'']\nonumber\\
&&~~~~~\times e^{j2\pi\Big(\frac{d_2}{MN}(m-d_1)+n_t\alpha_{d_1,d_2}\Big)}+N_c[m,n],\nonumber\\
&&=\left\{
\begin{matrix}
\begin{aligned}
&\sum_{n_t=0}^{N_t-1}\sum_{d_1=0}^{L-1}\sum_{d_2=-K}^{K}X_{n_t}[m'',n'']H_{d_1,d_2,n_t}^{\textrm{DDS}}\\
&~~~+N_c[m,n],~~\textrm{if}~m''=m-d_1\geq 0, \\ 
&\sum_{n_t=0}^{N_t-1}\sum_{d_1=0}^{L-1}\sum_{d_2=-K}^{K}X_{n_t}[m''+M,n'']e^{-j2\pi\frac{n''}{N}}H_{d_1,d_2,n_t}^{\textrm{DDS}},\\
&~~~+N_c[m,n],~~\textrm{if}~m''=m-d_1< 0,
\end{aligned}
\end{matrix}
\right.
\end{eqnarray}
where $X_{n_t}[m,n]$ is the symbol transmitted by  $n_t$th antenna,  $H_{d_1,d_2,n_t}^{\textrm{DDS}}=h_{d_1,d_2}e^{j2\pi\big(\frac{d_2}{MN}(m-d_1)+n_t\alpha_{d_1,d_2}\big)}$  is the corresponding DDS channel\footnote{Similar to the channel gain, $\alpha_{d_1,d_2}$ can be derived from $\alpha_p$ by considering the geometric and physical characteristics of the wireless channel. In this paper,  the delay-Doppler grid point $(d_1,d_2)$ maps to a single dominant path $p$, $a_{d_1,d_2}$ is directly assigned from $\alpha_p$ after quantizing $\tau_p$ and $\nu_p$ to the nearest grid indices $d_1$ and $d_2$.}, and $n''=[n-d_2]_N$. Given the input-output relationship in~\eqref{IO_relationship}, we will concentrate on the issue of channel estimation for massive MIMO-ODDM systems in the following section.

\section{Formulation of the Channel Estimation Problem}
In this section, we develop a formulation for the channel estimation problem in massive MIMO-ODDM systems. We first transform the input-output relationship in~\eqref{IO_relationship} into the vector form and establish the effective channel model for the channel estimation problem.  

As introduced before, the maximum delay and Doppler of the channel are given by $(K-1)\frac{T}{M}$ and $L\frac{1}{NT}$, respectively. We can arrange the $P$ paths of the channel into a matrix~$\bm{G}$ with dimension $(2L+1)\times K$. The gain of the $p$-th path is the non-zero element in matrix $\bm{G}$, i.e., $g(l+L+1,k)$, characterized by a delay of $k\frac{T}{M}$ and Doppler shift of $l\frac{1}{MT}$, respectively. 

Due to the SMT modulation in the ODDM modulation, the $m$-th received ODDM symbol experiences ISI for the $(m-k)$-th symbol due to the path with a delay $k\frac{T}{M}$. In this case, the Doppler shift $l\frac{1}{NT}$ causes a cyclic shift of the interfering $(m-k)$-th symbol initiates at $\frac{(m-k)T}{M}$, and the Doppler effect induces a phase rotation $e^{j2\pi l\frac{m-k}{MN}}$.
	
Building on the preceding analysis, we can express it in the following vector form 
			\begin{eqnarray}\label{vector_form1}
				\bm{y}=\sum_{n_t=0}^{N_t-1}\bm{\Phi}_{n_t}\bm{h}_{n_t}+\bm{n}_c,
			\end{eqnarray}
			where  $\bm{y}=\textrm{vec}(\bm{Y})\in\mathbb{C}^{MN\times 1}$, 
			$\bm{h}_{n_t}=\textrm{vec}(\bm{G}\odot\bm{F}_{n_t} )\in\mathbb{C}^{(2K+1)L\times 1}$, $\big[\bm{F}_{n_t}\big]_{m,n}=e^{j2\pi(n_t-1)\alpha_{m,n}}$,  and $\bm{\Phi}_{n_t}\in\mathbb{C}^{MN\times (2K+1)L}$  is expressed as 
			\begin{align}\label{effective_matrix}
				\bm{\Phi}_{n_t}=\resizebox{2.7in}{!}{$\begin{bmatrix}
						\bm{Z}_{n_t}^0 & \cdots  & \cdots &\bm{Z}_{n_t}^0&\cdots & \cdots  & \bm{Z}_{n_t}^0\\	
						\vdots& \vdots & \vdots & \vdots&  \vdots &  \vdots &\vdots \\
						\bm{Z}_{n_t}^{M-L}& \cdots  & \cdots&\bm{Z}_{n_t}^{M-L}&\cdots & \cdots  & \bm{Z}_{n_t}^{M-L}\\
						\bm{Z}_{n_t}^{M-L+1}&  \cdots & \cdots & \cdots&\cdots& \bm{Z}_{n_t}^{M-L+1} & \hat{\bm{Z}}_{n_t}^{M-L+1} \\
						\vdots& \vdots & \vdots & \vdots&  \vdots &  \vdots &\vdots\\ 
						\bm{Z}_{n_t}^{M-2}& \bm{Z}_{n_t}^{M-2} & \hat{\bm{Z}}_{n_t}^{M-2}  & \cdots&\cdots& \cdots  & \hat{\bm{Z}}_{n_t}^{M-2}\\
						\bm{Z}_{n_t}^{M-1}& \hat{\bm{Z}}_{n_t}^{M-1} & \hat{\bm{Z}}_{n_t}^{M-1}&  \cdots&\cdots& \cdots  & \hat{\bm{Z}}_{n_t}^{M-1}
					\end{bmatrix}$},
			\end{align}
			where 
			\begin{eqnarray}
				&&\bm{Z}_{n_t}^i=[\bm{z}_{n_t}^{i,-K},\cdots,\bm{z}_{n_t}^{i,0},\cdots,\bm{z}_{n_t}^{i,K}],\\
				&&\hat{\bm{Z}}_{n_t}^i=[\hat{\bm{z}}_{n_t}^{i,-K},\cdots,\hat{\bm{z}}_{n_t}^{i,0},\cdots,\hat{\bm{z}}_{n_t}^{i,K}],
			\end{eqnarray}
			where
	$\bm{z}_{n_t}^{i,k}=e^{j2\pi\frac{ki}{MN}}\bm{C}^k\bm{x}_{n_t,i}^T$, $\hat{\bm{z}}_{n_t}^{i,k}=e^{j2\pi\frac{k(i-M)}{MN}}\bm{C}^k\bm{D}\bm{x}_{n_t,i}^T$,$\bm{x}_{n_t,i}$ is $i$-th row of symbols transmitted by $n_t$-th antenna, $\bm{C}$ is the $N\times N$ cyclic permutation matrix, i.e,
			\begin{align}
				\bm{C}=\resizebox{1.2in}{!}{$\begin{bmatrix}
						0& \cdots & 0 & 1 \\	
						1 & \vdots & \vdots &  \vdots \\
						\vdots& \cdots  & \ddots& \vdots \\
						0&  \cdots & 1 &  0
					\end{bmatrix}$},
			\end{align}
			and $\bm{D}=\textrm{diag}\Big\{1,e^{-j\frac{2\pi}{N},\cdots,e^{-j\frac{2\pi(N-1)}{N}}}\Big\}$ the phase rotation term, which is utilized for $\bm{x}_{[m-l]_M}$ when $m-l<0$.

		Further, let $\tilde{\bm{\Phi}}=\left[\bm{\Phi}_0,\bm{\Phi}_1,\cdots,\bm{\Phi}_{N_t-1}\right]$ and $\tilde{\bm{h}}=\Big[\bm{h}_0^T,\bm{h}_{1}^T, \cdots, \bm{h}_{N_t-1}^T \Big]^T$. Then, we can transform~\eqref{vector_form1} into the following effective channel model 
			\begin{eqnarray}\label{IO_relationship2}
				\bm{y}=\tilde{\bm{\Phi}}\tilde{\bm{h}}+\bm{n}_c,
			\end{eqnarray}
			where $\bm{y}\in\mathbb{C}^{MN\times 1}$, $\tilde{\bm{\Phi}}\in\mathbb{C}^{MN\times(2K+1)LN_t}$ is the equivalent coefficient matrix, and $\tilde{\bm{h}}\in\mathbb{C}^{(2K+1)LN_t\times 1}$ is the corresponding equivalent channel vector. The CSI
			$\tilde{\bm{h}}$ contains the information of delay, Doppler, gains, and angles of the multi-path channel. Given that the channel vector contains only $P$ non-zero elements in the channel vector, $\tilde{\bm{h}}$ is a sparse vector. Consequently, the channel estimation problem in the massive MIMO-ODDM system is a sparse recovery problem. Next, we will analyze the properties of the equivalent coefficient matrix $\tilde{\bm{\Phi}}$ and the equivalent channel vector $\tilde{\bm{h}}$. 

Further, let $\tilde{\bm{\Phi}}=\left[\bm{\Phi}_0,\bm{\Phi}_1,\cdots,\bm{\Phi}_{N_t-1}\right]$ and $\tilde{\bm{h}}=\Big[\bm{h}_0^T,\bm{h}_{1}^T, \cdots, \bm{h}_{N_t-1}^T \Big]^T$. Then, we can transform~\eqref{vector_form1} into the following effective channel model 
\begin{eqnarray}\label{IO_relationship2}
\bm{y}=\tilde{\bm{\Phi}}\tilde{\bm{h}}+\bm{n}_c,
\end{eqnarray}
where $\bm{y}\in\mathbb{C}^{MN\times 1}$, $\tilde{\bm{\Phi}}\in\mathbb{C}^{MN\times(2K+1)LN_t}$ is the equivalent coefficient matrix, and $\tilde{\bm{h}}\in\mathbb{C}^{(2K+1)LN_t\times 1}$ is the corresponding equivalent channel vector. The CSI
$\tilde{\bm{h}}$ contains the information of delay, Doppler, gains, and angles of the multi-path channel. Given that the channel vector contains only $P$ non-zero elements in the channel vector, $\tilde{\bm{h}}$ is a sparse vector. Consequently, the channel estimation problem in the massive MIMO-ODDM system is a sparse recovery problem. Next, we will analyze the properties of the equivalent coefficient matrix $\tilde{\bm{\Phi}}$ and the equivalent channel vector $\tilde{\bm{h}}$. 
\begin{remark}[The Characteristic of Equivalent Coefficient Matrix]
As shown in~\eqref{effective_matrix}, there exists a correlation among columns in $\tilde{\bm{\Phi}}$  due to some repeated elements. Therefore, the elements in the equivalent coefficient matrix are not completely random. Fortunately, when the number of antennas tends to infinity, the number of independent and randomly generated data symbols\footnote{The data symbols are actually pilot symbols, which are fixed but randomly generated within a certain defined set. This random generation within a defined set ensures that the pilot symbols can effectively probe the channel state information.} accumulates since each additional antenna can carry independent data symbols. As a result, when the number of antennas satisfies $N_t\gg K$, the elements approach completely random.
\end{remark}
\begin{remark}[The Characteristics of Equivalent Channel Vector]
The non-zero elements in $\tilde{\bm{h}}$ correspond to the channel gain multiplied by the steering vector elements, with the channel gain following a Bernoulli Gaussian distribution. However, since the steering vector elements vary across antennas, the elements in the channel vector $\tilde{\bm{h}}$ do not adhere to the Bernoulli Gaussian distribution. Fortunately, the magnitudes of the steering vector are unity, leading to the magnitudes of the elements in the channel vector being equal to the channel gain, which is distributed according to a Bernoulli-Gaussian distribution.   
\end{remark}

Moreover, traditional sparse signal recovery algorithms encounter a bottleneck of high computational complexity~\cite{liu2022memory}, making them unsuitable for high-mobility scenarios. In this scenario, the CSI changes rapidly, which requires the fast estimation of CSI. As a result, it is essential to develop low-complexity sparse signal recovery algorithms. To tackle the challenges, we estimate the channel information by proposing a two-stage algorithm based on the MAMP and DFT schemes, which eliminates the need for matrix inversion operations, thereby exhibiting low complexity. The details of the algorithm are introduced in the subsequent section.

\section{Proposed Algorithm for the Channel Estimation Problem}
In this section, we present a two-stage low-complexity algorithm for estimating the CSI in the massive MIMO ODDM systems. Specifically, we begin by estimating the channel gains, delay, and Doppler using the MAMP method, followed by estimating the angles through the DFT approach. 
\subsection{Estimation of the Gains, Delay, and Doppler}\label{subsec. estimation gains, delay}
To estimate the gains, delay, and Doppler of the multi-path channels, we initially reformulate the channel estimation problem in~\eqref{IO_relationship2} as
\begin{eqnarray}
\Xi: &  \bm{y}=\tilde{\bm{\Phi}}\tilde{\bm{h}}+\bm{n}_c,\label{input-output}\\
\Sigma: &\tilde{h}_i\sim \textrm{P}_{\tilde{h}}, \forall{i}.
\end{eqnarray}
Here, the channel information $\tilde{\bm{h}}$ refers to the above triplet information, where the elements in $\tilde{\bm{h}}$ follow the Bernoulli-Gaussian distribution. The objective of channel estimation is to find a minimum mean squared error~(MMSE) estimation for channel information $\tilde{\bm{h}}$, where the corresponding MSE converges to 
\begin{eqnarray}
    \textrm{MMSE}\{\tilde{\bm{h}}|\bm{y},\tilde{\bm{\Phi}},\Xi,\Sigma\}=\frac{1}{(2K+1)LN_t}\mathbb{E}[\|\hat{\bm{h}}-\tilde{\bm{h}}\|^2\big],
\end{eqnarray}
where $\hat{\bm{h}}=\mathbb{E}\big[\tilde{\bm{h}}|\bm{y},\tilde{\bm{\Phi}},\Xi,\Sigma\big]$ signifies the posterior estimation of $\tilde{\bm{h}}$. Then, we estimate the gains, delay, and Doppler based on the MAMP method.
\subsubsection{Introduction about the MAMP}
MAMP contains a long-memory linear estimator~(LLE) and a non-linear estimator~(NLE), which are expressed as
\begin{eqnarray}
\textrm{LLE}:& \bm{\mu}_t=\vartheta_t(\tilde{\bm{h}}_1,\cdots,\tilde{\bm{h}}_t),\\
\textrm{NLE}:&\tilde{\bm{h}}_{t+1}=\gamma_t(\bm{\mu}_t).
\end{eqnarray}
In the MAMP process, the long memory is reflected in LLE and $\vartheta_t$ consists of the previous information, i.e., $\tilde{\bm{h}}_k, k\leq t$, where $t$ represents the $t$-th iteration in the MAMP process. In this aspect, the sequential orthogonalization between the current input and output estimation is not adequate to ensure the asymptotic i.i.d. Gaussianity for MAMP. Consequently, a more rigorous orthogonality is demanded, i.e.,  the approximation error $\vartheta_t$ must exhibit orthogonality to all prior estimation errors.


In the process of LLE and NLE, $\vartheta_t$ and $\gamma_t$  are utilized to process the constraints $\Xi$ and $\Gamma$, respectively. Then, we have the following definition
\begin{eqnarray}
\bm{\mu}_t=\tilde{\bm{h}}+\bm{g}_t^{\vartheta},~~\tilde{\bm{h}}_t=\tilde{\bm{h}}+\bm{g}_t^{\gamma},
\end{eqnarray}
where $\bm{g}_t^{\vartheta}$ and $\bm{g}_t^{\gamma}$ are the estimation errors that have a zero mean, and their covariances are expressed as 
\begin{eqnarray}
&&\nu_{t,t'}^{\vartheta}=\frac{1}{MNN_t}\mathbb{E}\big[(\bm{g}_t^{\vartheta})^H\bm{g}_{t'}^{\vartheta}\big],\label{covariance_LE}\\
&&\nu_{t,t'}^{\gamma}=\frac{1}{MNN_t}\mathbb{E}\big[(\bm{g}_t^{\gamma})^H\bm{g}_{t'}^{\gamma}\big]\label{covariance_NLE}.
\end{eqnarray}
The rigorous orthogonality in MAMP scheme is expressed as
\begin{eqnarray}
&&\lim_{MNN_t\rightarrow\infty} \frac{1}{MNN_t}\tilde{\bm{h}}^H\bm{g}_t^{\vartheta}\overset{\textrm{a.s.}}{=}0,~\forall t'\in[1,t],\label{orthogonal1}\\
&&\lim_{MNN_t\rightarrow\infty}\frac{1}{MNN_t}(\bm{g}_{t'}^{\gamma})^{H}\bm{g}_t^{\vartheta}\overset{\textrm{a.s.}}{=}0,~\forall t'\in[1,t],\label{orthogonal2}\\
&& \lim_{MNN_t\rightarrow\infty}\frac{1}{MNN_t}(\bm{g}_{t'}^{\vartheta})^{H}\bm{g}_{t+1}^{\gamma}\overset{\textrm{a.s.}}{=}0,~\forall t'\in[1,t]\label{orthogonal3}.
\end{eqnarray}
Then, we construct the LLE and NLE for the channel estimation problem in the massive MIMO-ODDM systems.
\subsubsection{Design of LLE and NLE}
Based on the LMMSE criterion, we can derive the posterior estimate of $\tilde{\bm{h}}$ as
\begin{eqnarray}\label{LMMSE}
\bm{\mu}_{t}=\tilde{\bm{h}}_t+\zeta_t^{\vartheta}\tilde{\bm{\Phi}}^H(\rho_t\bm{I}+\zeta_t^{\vartheta}\tilde{\bm{\Phi}}\tilde{\bm{\Phi}}^H)^{-1}(\bm{y}-\tilde{\bm{\Phi}}\tilde{\bm{h}}_t),
\end{eqnarray}
where $\rho_t=\frac{\sigma^2}{\nu_{t,t}^{\gamma}
}$, and $\zeta_t^{\vartheta}$ is the orthogonalization parameter. To acquire the channel information, it is necessary to perform matrix inversion, which causes high computational complexity. To significantly reduce the complexity, we first present the following lemma for clarity. 
\begin{lemma}\label{lemma1}
	Given a matrix $\bm{C}$ where $\bm{I}-\bm{C}$ is invertible and the spectral radius $\rho(\bm{C})<1$, then 
				\begin{eqnarray}
					\sum_{i=0}^{\infty}\bm{C}^i=(\bm{I}-\bm{C})^{-1},
				\end{eqnarray}
				Additionally, beginning with $t=1$ and an initial condition $\bm{r}_0=0$,  the sequence $\bm{r}_t=\bm{C}\bm{r}_{t-1}+\bm{x}$ can be utilized to recursively approximate the series sum, which also converges to 
				\begin{eqnarray}
					\lim_{t\rightarrow\infty}\bm{r}_t=(\bm{I}-\bm{C})^{-1}\bm{x}.
			\end{eqnarray}
\end{lemma}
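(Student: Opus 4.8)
The plan is to recognize this as the classical Neumann-series identity together with its associated fixed-point iteration, and to prove the two claims in sequence. First I would establish the core analytic fact that the spectral-radius condition $\rho(\bm{C})<1$ forces the matrix powers to vanish, i.e. $\bm{C}^i\to\bm{0}$ as $i\to\infty$. The cleanest route is Gelfand's formula, $\rho(\bm{C})=\lim_{i\to\infty}\|\bm{C}^i\|^{1/i}$, which guarantees the existence of some $\beta$ with $\rho(\bm{C})<\beta<1$ and a constant $M$ such that $\|\bm{C}^i\|\le M\beta^i$ for all large $i$; equivalently one may pass to the Jordan form of $\bm{C}$ and bound each Jordan block directly. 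Either way, $\|\bm{C}^i\|$ decays geometrically, so $\bm{C}^i\to\bm{0}$.

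Second, I would prove convergence of the series and identify its sum. The partial sums $\bm{S}_n=\sum_{i=0}^{n}\bm{C}^i$ form a Cauchy sequence, since $\|\bm{S}_{n+k}-\bm{S}_n\|\le\sum_{i=n+1}^{n+k}M\beta^i$ is controlled by the tail of a convergent geometric series; hence $\bm{S}_n$ converges in the complete space of matrices. To pin down the limit I would use the telescoping identity $(\bm{I}-\bm{C})\bm{S}_n=\bm{I}-\bm{C}^{n+1}$; letting $n\to\infty$ and invoking $\bm{C}^{n+1}\to\bm{0}$ gives $(\bm{I}-\bm{C})\sum_{i=0}^{\infty}\bm{C}^i=\bm{I}$. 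Since $\bm{I}-\bm{C}$ is invertible by hypothesis, left-multiplying by $(\bm{I}-\bm{C})^{-1}$ yields $\sum_{i=0}^{\infty}\bm{C}^i=(\bm{I}-\bm{C})^{-1}$, which is the first claim.

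Third, for the recursive statement I would unroll the iteration. A one-line induction from $\bm{r}_0=\bm{0}$ shows $\bm{r}_t=\sum_{i=0}^{t-1}\bm{C}^i\bm{x}$: the base case $\bm{r}_1=\bm{x}$ is immediate, and $\bm{r}_{t+1}=\bm{C}\bm{r}_t+\bm{x}=\sum_{i=1}^{t}\bm{C}^i\bm{x}+\bm{x}=\sum_{i=0}^{t}\bm{C}^i\bm{x}$ closes the induction. Taking $t\to\infty$ and applying the series sum already established gives $\lim_{t\to\infty}\bm{r}_t=\big(\sum_{i=0}^{\infty}\bm{C}^i\big)\bm{x}=(\bm{I}-\bm{C})^{-1}\bm{x}$, completing the proof.

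The main obstacle is the first step: everything downstream—Cauchy convergence of the partial sums, vanishing of the telescoped remainder, and convergence of the iteration—rests on turning the spectral-radius hypothesis $\rho(\bm{C})<1$ into a genuine geometric decay bound $\|\bm{C}^i\|\le M\beta^i$. This is where Gelfand's formula (or, equivalently, a Jordan-block estimate) does the real work, while the remaining steps are routine telescoping and induction. One subtlety worth flagging is that $\rho(\bm{C})<1$ already implies $\bm{I}-\bm{C}$ is invertible, since $1$ cannot then be an eigenvalue of $\bm{C}$; the stated invertibility hypothesis is therefore redundant but harmless, and I would simply invoke it directly rather than re-deriving it.
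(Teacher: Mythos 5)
Your proof is correct, and it takes a genuinely different --- and in fact more robust --- route than the paper. The paper's proof writes $\bm{C}=\bm{U\Sigma U}^H$ with $\bm{U}$ unitary (calling this the SVD), asserts that $\rho(\bm{C})<1$ forces the singular values to satisfy $0\le\sigma_i\le 1$, inverts $\bm{I}-\bm{\Sigma}$ entrywise via the scalar geometric series, and then unrolls the recursion exactly as you do. That spectral-decomposition argument is only valid when $\bm{C}$ is normal: a general matrix's SVD is $\bm{U\Sigma V}^H$ with two \emph{different} unitaries, and the spectral radius does not bound the singular values (e.g.\ $\bm{C}=\bigl[\begin{smallmatrix}0 & 10\\ 0 & 0\end{smallmatrix}\bigr]$ has $\rho(\bm{C})=0$ but largest singular value $10$). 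Your argument --- Gelfand's formula (or a Jordan-block bound) giving geometric decay $\|\bm{C}^i\|\le M\beta^i$, Cauchy convergence of the partial sums, and the telescoping identity $(\bm{I}-\bm{C})\bm{S}_n=\bm{I}-\bm{C}^{n+1}$ --- is the standard Neumann-series proof and holds for every matrix with $\rho(\bm{C})<1$, normal or not, diagonalizable or not. Your closing observation that the invertibility hypothesis on $\bm{I}-\bm{C}$ is redundant is also correct. The recursion step (induction yielding $\bm{r}_t=\sum_{i=0}^{t-1}\bm{C}^i\bm{x}$) matches the paper's. In short: same second half, but your first half closes a real gap in the paper's own argument rather than merely rephrasing it.
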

	\begin{proof}
		Based on the SVD method, we have $\bm{C}=\bm{U\Sigma U}^H$, where $\bm{U}$ is unitary matrix. Then we can derive 
				\begin{eqnarray}
					\bm{I}-\bm{C}=\bm{I}-\bm{U\Sigma U}^H=\bm{U}(\bm{I}-\bm{\Sigma})\bm{U}^H.
				\end{eqnarray}
				Since the spectral radius $\rho(\bm{C})<1$, the single values $\sigma_i$ of $\bm{C}$ satisfy $0\leq\sigma_i\leq 1$. Then, the inverse of $\bm{I}-\bm{C}$ can be expressed
				\begin{eqnarray}
					(\bm{I}-\bm{\Sigma})^{-1}=diag\big(\frac{1}{1-\sigma_1},\cdots,\frac{1}{1-\sigma_n}\big). 
				\end{eqnarray}
				Since $\frac{1}{1-x}=\sum_{i=1}^{\infty}x^i~~(|x|<1)$, we have
				\begin{eqnarray}
					(1-\bm{\Sigma})^{-1}=\sum_{i=0}^{\infty}\bm{\Sigma}^i,
				\end{eqnarray}
				which can lead to $(\bm{I}-\bm{C})^{-1}=\sum_{i=0}^{\infty}\bm{U}\bm{\Sigma}^i\bm{U}^H$. 
			
			On the other hand, $\bm{C}^i=(\bm{U\Sigma\bm{U}}^H)^i=\bm{U}\Sigma^i\bm{U}^H$, thus we have  $\sum_{i=0}^{\infty}\bm{C}^i=\sum_{i=0}^{\infty}\bm{U}\bm{\Sigma}^i\bm{U}^H$, which competes the proof.
			
			Consider that $\bm{r}_t=\bm{Cr}_{t-1}+\bm{x}$, we have
				\begin{eqnarray}
					&&\bm{r}_t=\bm{C}^2\bm{r}_{t-2}+\bm{C}\bm{x}+\bm{x}\nonumber\\
					&&~~~~~~~~\cdots\cdots\nonumber\\
					&&~~~=\sum_{i=0}^{t-1}\bm{C}^i\bm{x}\overset{t\rightarrow\infty}{=}(\bm{I}-\bm{C})^{-1}\bm{x}.
			\end{eqnarray}
		\end{proof}
        
Based on \emph{Lemma}~\ref{lemma1}, we can re-express the LLE as 
\begin{eqnarray}
\hat{\bm{\mu}}_t=\big[\bm{I}-\theta_t(\rho_t\bm{I}+\tilde{\bm{\Phi}}\tilde{\bm{\Phi}}^H)\big]\hat{\bm{\mu}}_{t-1}+\kappa_t(\bm{y}-\tilde{\bm{\Phi}}\tilde{\bm{h}}_t),
\end{eqnarray}
where $\kappa_t$ is selected to expedite the convergence of MAMP detector, $\theta_t$ serves as a relaxation parameter, ensuring the spectral radius of $\bm{I}-\theta_t(\rho_t\bm{I}+\tilde{\bm{\Phi}}\tilde{\bm{\Phi}}^H)$ remains below $1$. $\theta_t$ can be determined by
\begin{eqnarray}\label{optimized_theta}
    \theta_t=(\bar{\lambda}+\rho_t)^{-1},
\end{eqnarray}
where $\bar{\lambda}=\frac{\lambda_{\min}+\lambda_{\max}}{2}$, $\lambda_{\min}$ and $\lambda_{\max}$ denote the minimum and maximum eigenvalues of $\tilde{\bm{\Phi}}\tilde{\bm{\Phi}}^H$. When $\lambda_{\min}$ and $\lambda_{\max}$  are difficult to obtain, we can replace them with a low bound and an upper bound, respectively~\cite{liu2022memory}.

For simplicity, we define $\bm{B}$ as $\bar{\lambda}\bm{I}-\tilde{\bm{\Phi}}\tilde{\bm{\Phi}}^H$, then we have $\theta_t\bm{B}=\bm{I}-\theta_t(\rho_t\bm{I}+\tilde{\bm{\Phi}}\tilde{\bm{\Phi}}^H)$. Consequently, we have the radius of $\theta_t\bm{B}$
as 
\begin{eqnarray}
&&\rho(\theta_t\bm{B})=\frac{\lambda_{\max}-\lambda_{\min}}{\lambda_{\min}+\lambda_{\max}+2\kappa_t}\nonumber\\
&&~~~~~~~~<1.
\end{eqnarray}
Hence, the convergence condition is satisfied with the optimized $\theta_t$ in~\eqref{optimized_theta}. Then, we can approximate \eqref{LMMSE} as 
\begin{eqnarray}\label{approximation_LMMSE}
\bm{\mu}_t=\bm{\mathcal{T}}_t\bm{y}+\sum_{k=1}^{t}\bm{R}_{t,k}\tilde{\bm{h}}_k,
\end{eqnarray}
where $\bm{\mathcal{T}}_t =\sum_{k=1}^t\eta_{t,k}\tilde{\bm{\Phi}}^H\bm{B}^{t-k}$, $\bm{R}_{t,k}=-\eta_{t,k}\tilde{\bm{\Phi}}^H\bm{B}^{t-k}\tilde{\bm{\Phi}}$, and 
\begin{eqnarray}
\eta_{t,i} = 
\begin{cases} 
\kappa_{t}, & i = t  \\
\kappa_{i} \prod_{j=i+1}^{t} \theta_{j}, & i < t 
\end{cases}.
\end{eqnarray}
Based on~\eqref{approximation_LMMSE}, the channel estimation requires all the preceding messages, therefore, the current output estimation error must be orthogonal to all previous input estimations to ensure stricter orthogonality requirements.

After obtaining $\theta_t$, we need to optimize $\kappa_t$ to minimize the MSE $\nu_{t,t}^{\vartheta}$ in LLE. Before the optimization, we define $w_t\overset{\triangle}{=}\frac{1}{MNN_t}\text{tr}\{\tilde{\bm{\Phi}}^H\bm{B}^t\tilde{\bm{\Phi}}\}$. According to the definition provided in~\eqref{covariance_LE}, we have 
\begin{eqnarray}
&&\nu_{t,t}^{\vartheta}=\frac{1}{MNN_t}\mathbb{E}\big[(\bm{g}_t^{\vartheta})^H\bm{g}_{t'}^{\vartheta}\big]\nonumber\\
&&\overset{\textrm{a.s.}}{=}\lim_{MNN_t\rightarrow\infty}\frac{1}{MNN_t(\zeta_t^{\vartheta})^2}\bigg(\bm{\mathcal{T}}_t\bm{z}-\sum_{k=1}^{t}\bm{R}_{t,k}\bm{g}_t^{\gamma}\bigg)^H\nonumber\\
&&~~~~\times\bigg(\bm{\mathcal{T}}_{t}\bm{z}-\sum_{k=1}^{t}\bm{R}_{t,k}\bm{g}_{t}^{\gamma}\bigg)\nonumber\\
&&=\frac{1}{MNN_t(\zeta_t^{\vartheta})^2}\bigg[\sigma^2\textrm{tr}\big\{\bm{\mathcal{T}}_t^H\bm{\mathcal{T}}_t\big\}\nonumber\\
&&~~~+\sum_{i=1}^t\sum_{j=1}^t\nu_{i,j}\textrm{tr}\big\{\bm{R}_{t,k}^H\bm{R}_{t,k}\big\}\bigg]\nonumber\\
&&=\frac{1}{(\zeta_t^{\vartheta})^2}\sum_{i=1}^{t}\sum_{j=1}^{t}\eta_{t,i}\eta_{t,j}(\sigma^2 w_{2t-i-j}+\nu_{i,j}^{\gamma}\bar{w}_{t-i,t-j})\nonumber\\
&&=\frac{e_{t,1}\kappa_{t}^{2} - 2e_{t,2}\kappa_{t} + e_{t,3}}{w_{0}^{2}(\kappa_{t} + e_{t,0})^{2}},
\end{eqnarray}
where 
\begin{eqnarray}
&&\bar{w}_{i,j}=\bar{\lambda}w_{i+j}-w_{i+j+1}-w_iw_j,\\ &&\zeta_t^{\vartheta}=\sum_{i=1}^t\eta_{t,i}w_{t-i},\\
&&e_{t,0} = -\sum_{i=1}^{t-1} \frac{-\eta_{t,i}w_{t-i}}{w_{0}},\\
&&e_{t,1} = \sigma^2 w_{0} + \nu_{i,j}^{\gamma}\bar{w}_{0,0}, \\
&&e_{t,2} = -\sum_{i=1}^{t-1} \eta_{t,i}\big(\sigma^2 w_{t-i} + \text{Re}(\nu_{t,i}^{\gamma})\bar{w}_{0,t-i}\big), \\
&&e_{t,3} = \sum_{i=1}^{t-1}\sum_{j=1}^{t-1}\eta_{t,i}\eta_{t,j}(\sigma^2 w_{2t-i-j} + \nu_{i,j}^{\gamma}\bar{w}_{t-i,t-j}).\nonumber\\
\end{eqnarray}
Then, we derive the optimal parameter $\kappa_t$ by minimizing $\nu_{t,t}^{\vartheta}$. Notice that $\nu_{t,t}^{\vartheta}(\kappa_t)=\infty$ when $\kappa_t=e_{t,0}$, therefore, the optimal $\kappa_t^*$ is $\kappa_1^*=1$ and for $t>2$,
\begin{eqnarray}
\kappa_t^* = 
\begin{cases} 
\frac{e_{t,2}e_{t,0}+e_{t,3}}{e_{t,1}e_{t,0}+e_{t,2}}, & \rm{if}~e_{t,1}e_{t,0}+e_{t,2}\neq 0, \\
\infty, & \rm{otherwise} 
\end{cases}.
\end{eqnarray}

On the other hand, the NLE $\gamma_t$ should reduce the Gaussian noise and enhance the quality of the estimate of $\tilde{\bm{h}}$, and the output of the NLE includes the maximum posterior mean and variance. Since the elements in $\tilde{\bm{h}}$ follow a Bernoulli-Gaussian distribution, i.e., 
\begin{eqnarray}
    \tilde{h}_i= b g, \ b \sim {\cal B}(p), \ g \sim {\cal N}(u_g, v_g),
\end{eqnarray}
where the variables $b$ and $g$ are independent. Consequently, the mean and variance of $\tilde{h}_i$ are given by
\begin{eqnarray}
&&\mathbb{E}\{\tilde{h}_i\} = \mathbb{E}\{b\}\mathbb{E}\{g\} = p u_g, \\
&& {\rm Var}\{\tilde{h}_i\} = \mathbb{E}\{\tilde{h}_i^2\} - \mathbb{E}\{\tilde{h}_i\}^2 \nonumber \\
&&= p (1-p) u_g^2 + p v_g.
\end{eqnarray}
We can then state the following theorem:
\begin{theorem}[The derivation of the output of the NLE]
The maximum posterior mean $x_{\rm post}$ and variance $v_{\rm post}$ of $\tilde{h}_i$ is derived as 
\begin{eqnarray}
&& x_{\rm post} = p_{\rm post}\hat{u}_g, \\
&& v_{\rm post} = p_{\rm post}(1-p_{\rm post})\hat{u}_g^2 + p_{\rm post}\hat{v}_g,
\end{eqnarray}
where $p_{\rm post} = \frac{p}{p+c(1-p)\exp(d)}, c= \sqrt{1+\frac{v_g}{\delta^2}}, d = \frac{(y-u_g)^2}{2(v_z+v_g)} - \frac{y^2}{2\delta^2}, \hat{v}_g = \big(v_g^{-1} + (\delta^2)^{-1}\big)^{-1}, \hat{u}_g = \hat{v}_g\big(v_g^{-1}u_g + (\delta^2)^{-1}y\big).$
\end{theorem}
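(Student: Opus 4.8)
The plan is to treat this as a scalar Bayesian denoising problem, exactly as the non-linear estimator (NLE) in the MAMP framework requires. By the asymptotic i.i.d.\ Gaussianity guaranteed by the orthogonality conditions \eqref{orthogonal1}--\eqref{orthogonal3}, each component of the LLE output $\bm{\mu}_t$ can be modeled as a scalar observation $y=\tilde{h}_i+n$, where $n$ is zero-mean Gaussian with the effective variance (written $\delta^2$, equivalently $v_z$) produced by the orthogonalized LLE. The task then reduces to computing the posterior mean and variance of a single coefficient $\tilde{h}_i$ under its Bernoulli-Gaussian prior.

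First I would write the prior explicitly. Since $\tilde{h}_i=bg$ with $b\sim\mathcal{B}(p)$ and $g\sim\mathcal{N}(u_g,v_g)$ independent, the event $b=0$ forces $\tilde{h}_i=0$, so the prior density is the two-component mixture
\begin{eqnarray}
P(\tilde{h})=(1-p)\,\delta_0(\tilde{h})+p\,\mathcal{N}(\tilde{h};u_g,v_g),
\end{eqnarray}
a point mass $\delta_0$ at the origin plus a Gaussian bump. Combining this with the Gaussian likelihood $P(y|\tilde{h})=\mathcal{N}(y;\tilde{h},\delta^2)$ through Bayes' rule yields a posterior that is again a two-component mixture: a spike at $0$ weighted by $(1-p)\mathcal{N}(y;0,\delta^2)$, and a Gaussian weighted by $p$ times the marginal likelihood of the active component.

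The key computational step is the Gaussian product rule. Multiplying $\mathcal{N}(y;\tilde{h},\delta^2)$ by $\mathcal{N}(\tilde{h};u_g,v_g)$ and completing the square in $\tilde{h}$ produces the posterior Gaussian $\mathcal{N}(\tilde{h};\hat{u}_g,\hat{v}_g)$ with $\hat{v}_g=(v_g^{-1}+(\delta^2)^{-1})^{-1}$ and $\hat{u}_g=\hat{v}_g(v_g^{-1}u_g+(\delta^2)^{-1}y)$, together with a scalar prefactor equal to $\mathcal{N}(y;u_g,\delta^2+v_g)$. Normalizing the two posterior weights gives the active-component posterior probability
\begin{eqnarray}
p_{\rm post}=\frac{p}{p+(1-p)\,\mathcal{N}(y;0,\delta^2)/\mathcal{N}(y;u_g,\delta^2+v_g)},
\end{eqnarray}
and the ratio of the two Gaussian normalizations collapses exactly to $c\exp(d)$ with $c=\sqrt{1+v_g/\delta^2}$ and $d=\frac{(y-u_g)^2}{2(\delta^2+v_g)}-\frac{y^2}{2\delta^2}$, recovering the stated $p_{\rm post}$. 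Finally, since the spike contributes nothing to either moment, the posterior mean is $x_{\rm post}=p_{\rm post}\hat{u}_g$, and the posterior second moment is $p_{\rm post}(\hat{u}_g^2+\hat{v}_g)$; subtracting $x_{\rm post}^2$ yields $v_{\rm post}=p_{\rm post}(1-p_{\rm post})\hat{u}_g^2+p_{\rm post}\hat{v}_g$, matching the theorem.

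I expect the only real obstacle to be bookkeeping rather than anything conceptual: carefully tracking the $\sqrt{2\pi(\cdot)}$ normalization factors so that their ratio simplifies cleanly to $c\exp(d)$, and confirming that the two symbols $\delta^2$ and $v_z$ in the statement denote the same effective noise variance. I would also remark that, despite the ``maximum posterior'' wording, the quantities computed are the MMSE posterior mean and posterior variance (the standard MAMP denoiser), not a MAP estimate.
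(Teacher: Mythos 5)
Your proposal is correct and follows essentially the same route as the paper's Appendix~B: both apply Bayes' rule to the Bernoulli component to obtain $p_{\rm post}$ (with the ratio of Gaussian normalizations collapsing to $c\exp(d)$) and use Gaussian conjugacy to identify the active-component posterior $\mathcal{N}(\hat{u}_g,\hat{v}_g)$, then read off the mixture moments. The only differences are presentational — you write the prior as an explicit spike-and-slab density and spell out the variance step (second moment minus squared mean) that the paper leaves implicit, and your remarks about the $\delta^2$ versus $v_z$ notation and the ``maximum posterior'' wording correctly flag sloppiness that is present in the paper itself.
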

\begin{proof}
    Please refer to Appendix B.
\end{proof}
After obtaining the maximum posterior mean and variance of $\tilde{h}_i$, we need to guarantee the orthogonality between the input error and the output error. 

Finally, to ensure the convergence and enhance the efficiency of the MAMP detector, we utilize  the damping vector $\bm{\varsigma}_{t+1}\overset{\triangle}{=}[\varsigma_{t+1,1},\cdots,\varsigma_{t+1,t+1}]^T$ with $\sum_{i=1}^{t+1}\varsigma_{t+1,i}=1$, then the NLE can be rewritten as 
\begin{eqnarray}
&&\gamma_t(\bm{\mu}_1,\cdots,\bm{\mu}_t)=\frac{1}{\varepsilon_t^{\gamma}}(\hat{\gamma}(\bm{\mu}_1,\cdots,\bm{\mu}_t)-[\bm{\mu}_1,\cdots,\bm{\mu}_t]\bm{u}_t)\nonumber\\
&&~~~~~~~~~~~\times\bm{\varsigma}_{t+1}.
\end{eqnarray}

Define $\mathcal{I}_t$ as the index of effective memories, therefore $\tilde{\mathcal{I}}_t$ is the corresponding complementary. To avoid the trial memory adding the damping process, we set $\bm{\varsigma}_{t+1,i}=0,~i\in\tilde{\mathcal{I}}_t$. Define $\mathcal{U}$ as the covariance of $\tilde{\bm{h}}_i=0,~i\in\mathcal{I}_t$.  We can derive the optimal damping vector as
\begin{eqnarray}
 \bm{\varsigma}_{t+1}^{\mathcal{I}}=\left\{\begin{array}{ll}\frac{\left[\mathcal{\bm{U}}_{t+1}^{\mathcal{I}}\right]^{-1}\bm{1}}{\bm{1}^{\mathrm{T}}\left[\mathcal{\bm{U}}_{t+1}^{\mathcal{I}}\right]^{-1} \bm{1}}, & \text { if } \mathcal{\bm{U}}_{t+1}^{\mathcal{I}} \text { is invertible } \\ {[0, \cdots, 1,0]^{\mathrm{T}},} & \text { otherwise }\end{array}\right.  .
\end{eqnarray}
Then, we can update $\nu_{t,t}^{\gamma}$ according to
\begin{eqnarray}
    \nu_{t+1,t+1}^{\gamma}=\left\{\begin{array}{ll}\frac{1}{\bm{1}^{\mathrm{T}}\left[\mathcal{\bm{U}}_{t+1}^{\mathcal{I}}\right]^{-1} \bm{1}}, & \text { if } \mathcal{\bm{U}}_{t+1}^{\mathcal{I}} \text { is invertible } \\ {\nu_{t,t}^{\gamma},} & \text { otherwise }\end{array}\right. . 
\end{eqnarray}

\subsection{Estimation of the Angles}\label{subsec. estimation angle}
Given the large number of antennas at the BS, we can utilize the DFT to the received signal for the angle estimation, which can investigate the channel sparsity of the angle domain. 
To solve the angle estimation problem, we first give the following lemma
\begin{lemma}
When the number of antennas $N_t\rightarrow\infty$ and $d_{\rm BS}/{\lambda_c}\leq 1$, we have the nonzero element of $\bm{U}_{N_t}^H\bm{A}_{N_t}$ in each column is 
\begin{eqnarray}
    \lim_{N_t\rightarrow \infty}\Big[\bm{U}_{N_t}^H\bm{A}_{N_t}\Big]_{n_p,p}\neq 0, \forall p,
\end{eqnarray}
where $\bm{A}_{N_t}=[\bm{a}(\theta_1), \bm{a}(\theta_2), \cdots, \bm{a}(\theta_P)]$, the (n,m)-th entry of the normalized DFT matrix $\bm{U}_{N_t}$ is $[\bm{U}_{N_t}]_{n,m}=\frac{1}{\sqrt{N_t}}e^{j\frac{2\pi}{N_t}(n-1)(m-1)}$, and 
\begin{eqnarray}
n_p=\left\{\begin{array}{ll} N_t\theta_p+1, & \text { if } \theta_p\in[0,d_{BS}/{\lambda_c})\\ N_t+N_t\theta_p+1 &\text { if } \theta_p\in[-d_{BS}/\lambda,0)\end{array}\right. . 
\end{eqnarray}
\end{lemma}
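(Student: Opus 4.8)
The plan is to compute the $(n,p)$-th entry of $\bm{U}_{N_t}^H\bm{A}_{N_t}$ in closed form and then examine its behaviour as $N_t\to\infty$. First I would substitute the explicit entries of the normalized DFT matrix and of the steering vector, recalling that $\big[\bm{a}_{N_t}(\theta_p)\big]_{k+1}=e^{j2\pi k\alpha_p}$ with $\alpha_p=d_{\rm BS}\sin(\theta_p)/\lambda_c$ (in the index formula the symbol $\theta_p$ is understood as this normalized spatial frequency $\alpha_p$). Since row $n$ of $\bm{U}_{N_t}^H$ is the conjugate of column $n$ of $\bm{U}_{N_t}$, this yields the finite geometric series
\begin{eqnarray}
\big[\bm{U}_{N_t}^H\bm{A}_{N_t}\big]_{n,p}=\frac{1}{\sqrt{N_t}}\sum_{k=0}^{N_t-1}e^{j2\pi k\left(\alpha_p-\frac{n-1}{N_t}\right)}.
\end{eqnarray}

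Second, I would evaluate this series as a Dirichlet kernel. Writing $\delta_{n,p}=\alpha_p-\frac{n-1}{N_t}$, summing the geometric series and using $e^{-j2\pi(n-1)}=1$ gives the magnitude
\begin{eqnarray}
\Big|\big[\bm{U}_{N_t}^H\bm{A}_{N_t}\big]_{n,p}\Big|=\frac{1}{\sqrt{N_t}}\left|\frac{\sin(\pi N_t\delta_{n,p})}{\sin(\pi\delta_{n,p})}\right|.
\end{eqnarray}
This is the standard spatial DFT leakage pattern: it is maximized when the grid point $\frac{n-1}{N_t}$ is closest modulo $1$ to $\alpha_p$, and it decays away from that point.

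Third, I would locate the peak bin and take the limit. When $\frac{n-1}{N_t}=\alpha_p$ (the on-grid case, which holds in the limit after rounding to the nearest bin), the summand is identically one and the entry equals $\sqrt{N_t}$; more generally, if $n$ is chosen so that $N_t\delta_{n,p}\in[-\tfrac12,\tfrac12]$, then the numerator stays $\Theta(1)$ while $\sin(\pi\delta_{n,p})=\Theta(1/N_t)$, so the magnitude is $\Theta(\sqrt{N_t})$ and hence bounded away from zero. Conversely, for any fixed $n$ with $\delta_{n,p}$ bounded away from an integer the entry is $O(1/\sqrt{N_t})\to 0$. It then remains to match this peak bin to the claimed index $n_p$: using the $N_t$-periodicity of $e^{-j2\pi k(n-1)/N_t}$ in $n-1$ together with $e^{j2\pi k\alpha_p}=e^{j2\pi k(\alpha_p+1)}$, a frequency $\alpha_p\in[0,d_{\rm BS}/\lambda_c)$ aligns with $n_p-1=N_t\alpha_p$, whereas a negative frequency $\alpha_p\in[-d_{\rm BS}/\lambda_c,0)$ wraps to $n_p-1=N_t+N_t\alpha_p$. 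The hypothesis $d_{\rm BS}/\lambda_c\le 1$ guarantees $|\alpha_p|<1$, so each path maps to a single unambiguous bin with no aliasing, giving $\lim_{N_t\to\infty}\big[\bm{U}_{N_t}^H\bm{A}_{N_t}\big]_{n_p,p}\neq 0$ for all $p$.

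The main obstacle I anticipate is the off-grid case: when $N_t\alpha_p$ is not an integer I must show carefully that the nearest-bin response stays bounded away from zero in the limit (controlling the Dirichlet kernel when $N_t\delta_{n_p,p}$ is a nonzero number in $[-\tfrac12,\tfrac12]$) while the energy leaking into neighbouring bins vanishes relative to the peak, so that $n_p$ is genuinely the unique dominant support index. The wrap-around bookkeeping that produces the two cases in the definition of $n_p$, and the justification of the notational identification of $\theta_p$ with $\alpha_p$, are the remaining technical points; the unitarity of $\bm{U}_{N_t}$, which fixes the total column energy $\|\bm{U}_{N_t}^H\bm{a}_{N_t}(\theta_p)\|^2=N_t$, serves as a useful sanity check that the $\Theta(\sqrt{N_t})$ peak captures the bulk of the column.
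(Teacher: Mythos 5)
Your argument is correct in substance, but there is nothing in the paper to compare it against: the paper's entire ``proof'' of this lemma is the sentence ``The proof can be seen in~\cite{zhou2022channel}.'' Your Dirichlet-kernel computation is precisely the standard argument that such references use, and it supplies what the citation hides: with $\delta_{n,p}=\alpha_p-\tfrac{n-1}{N_t}$, the entry has magnitude $\tfrac{1}{\sqrt{N_t}}\bigl|\sin(\pi N_t\delta_{n,p})/\sin(\pi\delta_{n,p})\bigr|$, which is $\Theta(\sqrt{N_t})$ at the nearest bin and $O(1/\sqrt{N_t})$ at bins whose offset is bounded away from zero, so the $n_p$-th entry stays bounded away from zero. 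You also correctly flag that the lemma's ``limit'' is loose (for the unnormalized steering vector the peak entry actually diverges like $\sqrt{N_t}$, so the limit does not exist as a finite number) and that $\theta_p$ in the index formula must be read as the normalized spatial frequency $\alpha_p=d_{\rm BS}\sin(\theta_p)/\lambda_c$; both readings are needed to make the statement meaningful.

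Two small repairs to your write-up. First, the main-lobe bound is stated imprecisely: if $x=N_t\delta_{n_p,p}\in[-\tfrac12,\tfrac12]$, the numerator $\sin(\pi x)$ is \emph{not} $\Theta(1)$ when $x$ is near $0$; the clean uniform statement is $\bigl|\sin(\pi x)\bigr|/\bigl(N_t\bigl|\sin(\pi x/N_t)\bigr|\bigr)\ge \bigl|\sin(\pi x)\bigr|/(\pi|x|)\ge 2/\pi$ on $|x|\le\tfrac12$, giving a peak entry of at least $\tfrac{2}{\pi}\sqrt{N_t}$ regardless of the off-grid offset. Second, your claim that $d_{\rm BS}/\lambda_c\le 1$ yields ``no aliasing'' overreaches: it only ensures $|\alpha_p|\le 1$ so that the two-case index formula is well defined; it does not prevent two distinct paths, one with $\alpha_p>0$ and one with $\alpha_q<0$, from landing in the same bin (e.g., $\alpha_p=0.9$ and $\alpha_q=-0.1$ both give $n=0.9N_t+1$). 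Unambiguous angle recovery requires $d_{\rm BS}\le\lambda_c/2$, which the paper derives separately immediately after the lemma; it is not part of the lemma's claim, which only asserts that the entry at $n_p$ is nonzero for each $p$.
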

\begin{proof}
The proof can be seen in~\cite{zhou2022channel}
\end{proof}
 Based on lemma~2, the condition $Nd_{BS}/{\lambda_c}+1\leq N-Nd_{BS}/{\lambda_c}+1$ must hold to avoid the angle ambiguity, then we have $d_{BS}\leq \lambda_c/2$. Applying DFT method to the received signal in~\eqref{received_siganl}, we have 
\begin{eqnarray}
&&\bm{y}_{DFT}=\bm{U}_{N_t}^H\bm{y}(t)\nonumber\\
&&=\sum_{p=1}^Ph_px(t-\tau_p)e^{j2\pi \nu_p(t-\tau_p)}\bm{U}_{N_t}^H\bm{a}(\theta_p)+\bm{U}_{N_t}^H\bm{n}_c(t),\nonumber\\
\end{eqnarray}
Therefore, the received signal $\bm{y}(t)$ after the DFT operation is an asymptotic sparse vector with $P$ nonzero elements. Thus, we can estimate the $P$ angles immediately from the nonzero elements of $\bm{y}_{DFT}$. However, since the number of antennas $N_t$ is finite, $N_t\theta_p, \forall p$ is generally not an integer. Consequently, most of the power of $\bm{y}_{DFT}$ tends to be concentrated in either the $(N_t\theta_p+1)$-th or the $(N_t+N_t\theta_p+1)$-th row, while some power will leak into the neighboring rows.  This phenomenon, referred to as the power leakage effect~\cite{fan2017angle}, which will cause inaccurate estimation for angles. 

Due to the resolution of the DFT being restricted to $\frac{1}{N_t}$, there is a difference between the discrete angle and the actual continuous angle. To enhance the precision of the angle estimation, an angle rotation scheme is employed to address the discrepancies in the DFT process, where the angle rotation matrix  is 
\begin{eqnarray}
    \bm{F}^{\triangle \theta_p}_{N_t}=\textrm{Diag}\big\{1,e^{j\triangle\theta_p},\cdots,e^{j(N_t-1)\triangle\theta_p}\big\},  \forall p,
\end{eqnarray}
where $\triangle\theta_p\in\left[-\frac{\pi}{N_t},\frac{\pi}{N_t}\right],~\forall p$ represent the phase rotation parameters. These parameters are utilized to modify the angle in the transmit steering vector, ensuring minimal power leakage during angle estimation. For example, the $(n,p)$-th element of matrix $\bm{U}_{N_t}^H\bm{F}^{\triangle \theta_p}_{N_t}\bm{A}_{N_t}$ is 
\begin{eqnarray}
    \frac{1}{\sqrt{N_t}}\sum_{k=1}^{N_t}e^{j2\pi(k-1)(\theta_p+\frac{\triangle\theta_p}{2\pi})-\frac{n-1}{N}},
\end{eqnarray}
thus, the optimal phase rotation parameter to avoid power leakages is derived as 
\begin{eqnarray}
    \triangle\theta_p=2\pi\bigg(\frac{n-1}{N}-\theta_p\bigg).
\end{eqnarray}

The ideal phase rotation parameter for $\theta_p$ can be determined by conducting a one-dimensional search to solve the following issue
\begin{eqnarray}\label{rotation_parameter}
\triangle\theta_p=\arg\max_{\triangle\theta}\|\{\bm{U}_{N_t}\}_{:,n_p}^H\bm{A}_{N_t}\bm{y}(t)\|^2_2.
\end{eqnarray}
To better show the impact of the angle rotation operation, we plot the power distribution before and after the angle rotation to better illustrate the sparse properties of $\bm{y}_{DFT}$ in Fig.~\ref{fig:phase_rotation}. Here, the number of antennas is $N_t=512$, and the number of paths between the BS and user is $P=1$ with the angle $\theta=25^{\circ}$. We can see multiple peaks before the phase rotation due to the power leakage. After the phase rotation, more power is located in $\theta=24.95^\circ$, which makes the angle estimation more accurate. In addition, due to the large number of antennas, the difference in angle estimation before and after phase rotation is small. However, when the number of antennas is limited, the estimation errors will be obvious.

\begin{figure}[tbp]
	{	\centering\includegraphics[width=2.7in]{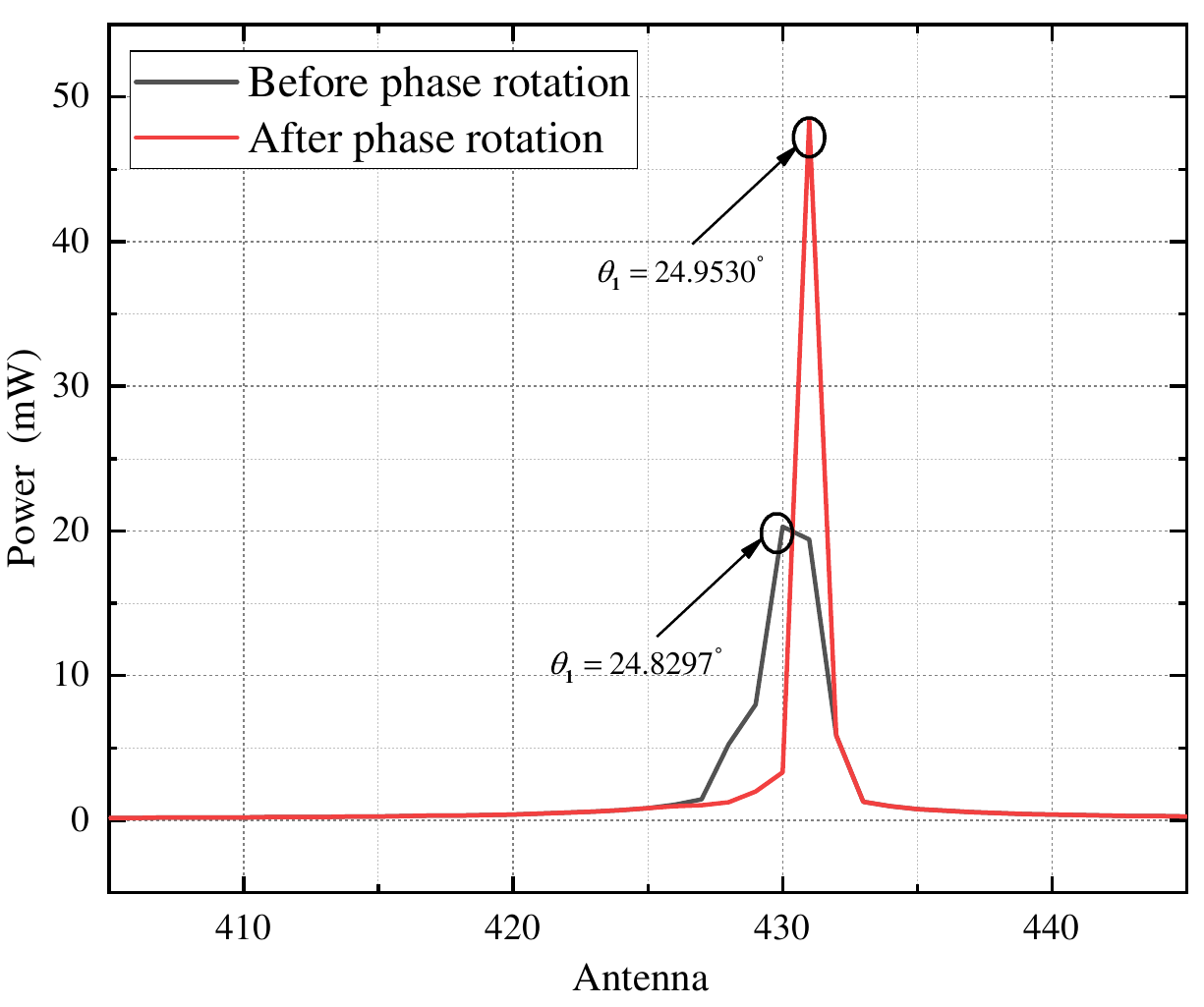}
		\caption{The illustration about the phase rotation of $\bm{y}_{DFT}$}
\label{fig:phase_rotation}
	}
\end{figure}

After obtaining the phase rotation parameters $\triangle\theta_p, \forall p$, we can obtain the angles as follows:
\begin{eqnarray}\label{angle_estimation}
\hat{\theta}_p = 
\begin{cases} 
\arcsin\left(\frac{\lambda_c(n_p-1)}{d_{bs}N_t}-\frac{\lambda_c\triangle\theta_p}{2\pi d_{BS}}\right), & n_p\leq \frac{N_td_{BS}}{\lambda_c}  \\
\arcsin\left(\frac{\lambda_c(n_p-N_t-1)}{d_{bs}N_t}-\frac{\lambda_c\triangle\theta_p}{2\pi d_{BS}}\right), & n_p > \frac{N_td_{BS}}{\lambda_c}  
\end{cases},    
\end{eqnarray}
where 
\begin{eqnarray}
n_p = 
\begin{cases} 
N_t\theta_p+1, & 0\leq\theta_p<d_{\rm BS}/{\lambda_c}  \\
N_t+N_t\theta_p+1, & -d_{\rm BS}/{\lambda_c}\leq \theta_p <0  
\end{cases}.    
\end{eqnarray}
The process of the angle estimation algorithm can be found in Algorithm~\ref{algorithm_AoA estimation}.
\begin{algorithm}[htbp]
	\caption{Angle Estimation Algorithm }\label{algorithm_AoA estimation}
	\begin{algorithmic}[1]
		\STATE Calculate the DFT of the received signal $\bm{y}$: $\bm{y}_{DFT}=\bm{U}_{N_t}^H\bm{y}$
		\STATE Calculate the power of each element in $\bm{y}_{DFT}$;
		\STATE Find the corresponding elements with a power peak
        \STATE Obtain the optimal angle rotation parameters $\{\triangle\theta_p\}_{p=1}^P$ according to \eqref{rotation_parameter}
        \STATE Estimate the angles based on  \eqref{angle_estimation}
		\RETURN $\hat{\theta}_1,\cdots,\hat{\theta}_P$
	\end{algorithmic}
\end{algorithm}

The proposed algorithm contains the MAMP and DFT methods. As shown in Subsec.~\ref{subsec. estimation gains, delay}, the MAMP method can enhance channel estimation performance by leveraging historical iterative information to optimize the convergence and robustness of traditional AMP algorithms. The MAMP scheme can dynamically adjust iteration directions through weighted fusion of current and prior estimates, effectively suppressing error accumulation. By stabilizing convergence toward theoretical bounds, MAMP significantly reduces MSE while maintaining low computational complexity, offering an efficient solution for high-dimensional signal recovery. In addition, the DFT scheme enables efficient angle estimation by transforming the channel into the angular domain, while inherent sparsity highlights dominant path components and suppresses noise, thereby improving the accuracy of the angle estimation. By combining the two schemes, an approximately Bayesian optimal channel estimation result can be obtained.

\subsection{Computational Complexity Analysis}
The computational complexity of the proposed algorithm mainly lies in the MAMP method. In our scenario, the equivalent channel vector $\tilde{\bm{h}}$ is sparse, and the algorithm primarily consists of matrix-vector multiplications. As a result, the proposed MAMP scheme exhibits low complexity.
Suppose that the duration of the iteration time is $T_t$, then the complexity for calculating $\tilde{\bm{\Phi}}\tilde{\bm{\Phi}}^H
\hat{\bm{\mu}}_t$ and $\tilde{\bm{\Phi}}\tilde{\bm{h}}$ is $\mathcal{O}(MN\times(2K+1)LN_tT_t)$, calculating $\sum_{k=1}^{t}\bm{R}_{t,k}\tilde{\bm{h}}_k$ is $\mathcal{O}((MN+(2K+1)LN_t)T_t^2)$, and calculating $\nu_{t,t}^{\vartheta}$ is $\mathcal{O}(T_t^3)$. The computational complexity for calculating $\bm{\varsigma}_{t}$ is negligible compared with the above complexity.  As a result, the overall computational complexity of the proposed algorithm is linear w.r.t. the dimension of the matrix, which is acceptable in practice.

In addition, we will analyze the computation complexity for the baselines, i.e., OMP and 3D-OMP algorithms, in the simulation part. Since the processes of the two algorithms are similar, the computational complexity is close. There, we take the OMP algorithm as an example. The computational complexity of OMP is primarily determined by three steps, including atom selection, least squares solution, and residual update, where the computational complexity of the three step is $\mathcal{O}(AMN(2K+1)LN_t)$, $\mathcal{O}(T^3MN(2K+1)LN_t)$, and $\mathcal{O}(MN(2K+1)LN_t)$, respectively, where $A$ is the number of atoms.
	
Based on the above analysis, the proposed algorithm exhibits a complexity on par with the baseline algorithms, both scaling linearly with the dimension of signals. Notably, the proposed algorithm can achieve near-Bayesian optimality. This enables is to deliver superior channel estimation performance without sacrificing computational efficiency, offering an enhanced balance between complexity and performance. 
\section{Numerical Results}
In this section, we present the numerical results of the proposed channel estimation algorithm. The simulation is conducted with a carrier frequency of 5 GHz and a subcarrier spacing of 15 kHz. The number of time slots $M$ is set to 512, while the number of subcarriers $N$ is 32. The modulation scheme utilized is 4-QAM, and the user's speed is set at 250 km/h. In addition, the number of antennas is 128, and we consider the single-user scenario for simplicity\footnote{In this paper, we consider the single-user scenario for simplicity, and thus the interference from other users is not explicitly modeled. For multi-user scenarios, we can extend the single-user scenarios by utilizing techniques such as multiple access to model the interference from other users.}. In the ODDM modulation, $a(t)$ is configured as a square-root raised cosine pulse with $Q=20$. The channel is characterized by employing the Extended Vehicular A~(EVA) model according to the 3GPP standard~\cite{3gpp2016evolved}. At first,  we utilize the normalized mean square error~(NMSE)\footnote{The NMSE can provide a relative measure that allows for easier comparison across across different simulation scenarios with varying channel power levels. Although many channel taps are with small values, $\|\tilde{\bm{h}}\|_2^2$ is not a very small value based on the results of multiple simulations. The choice of NMSE does not directly align with the theoretical Bayes-optimality framework, but it can accurately reflect the performance of the proposed algorithm, as long as the same metric condition is ensured.} to present the estimation performance, defined as 
\begin{eqnarray}
\textrm{NMSE}=\frac{\|\hat{\bm{h}}-\tilde{\bm{h}}\|_2^2}{\|\tilde{\bm{h}}\|_2^2},
\end{eqnarray}
where $\hat{\bm{h}}$ is the estimated channel information. 
\subsection{Performance Comparison}
In this subsection, we compare the proposed algorithm with the following baselines:
\begin{itemize}
\item \emph{Basedline 1: Traditional Orthogonal Matching Pursuit~(OMP)~\cite{lee2016channel}:} In the traditional OMP, the most relevant atoms are iteratively selected and approximate the target signal through orthogonalization. 
\item \emph{Baseline 2: 3D-Structured Orthogonal Matching Pursuit~(3D-OMP)~\cite{shen2019channel}:} Unlike the traditional OMP algorithm, the most relevant atoms are arranged as a tensor in the 3D-OMP algorithms. 
\end{itemize}

In Fig.~\ref{fig:NMSE_performance_vs_SNR},we present the NMSE performance comparison of various algorithms under different signal noise ratios~(SNRs). The NMSE of all algorithms shows an increased trend with higher SNRs, the reason is obvious because the higher SNRs can improve the performance of the channel estimation. Moreover, the proposed algorithm outperforms both traditional OMP and 3D-OMP schemes. This superior performance can be attributed to the fact that the proposed algorithm can achieve near-Bayes-optimal results because the characteristics of the equivalent coefficient matrix, i.e., approaching but not being completely random.

\begin{figure}[tbp]
	{	\centering\includegraphics[width=2.7in]{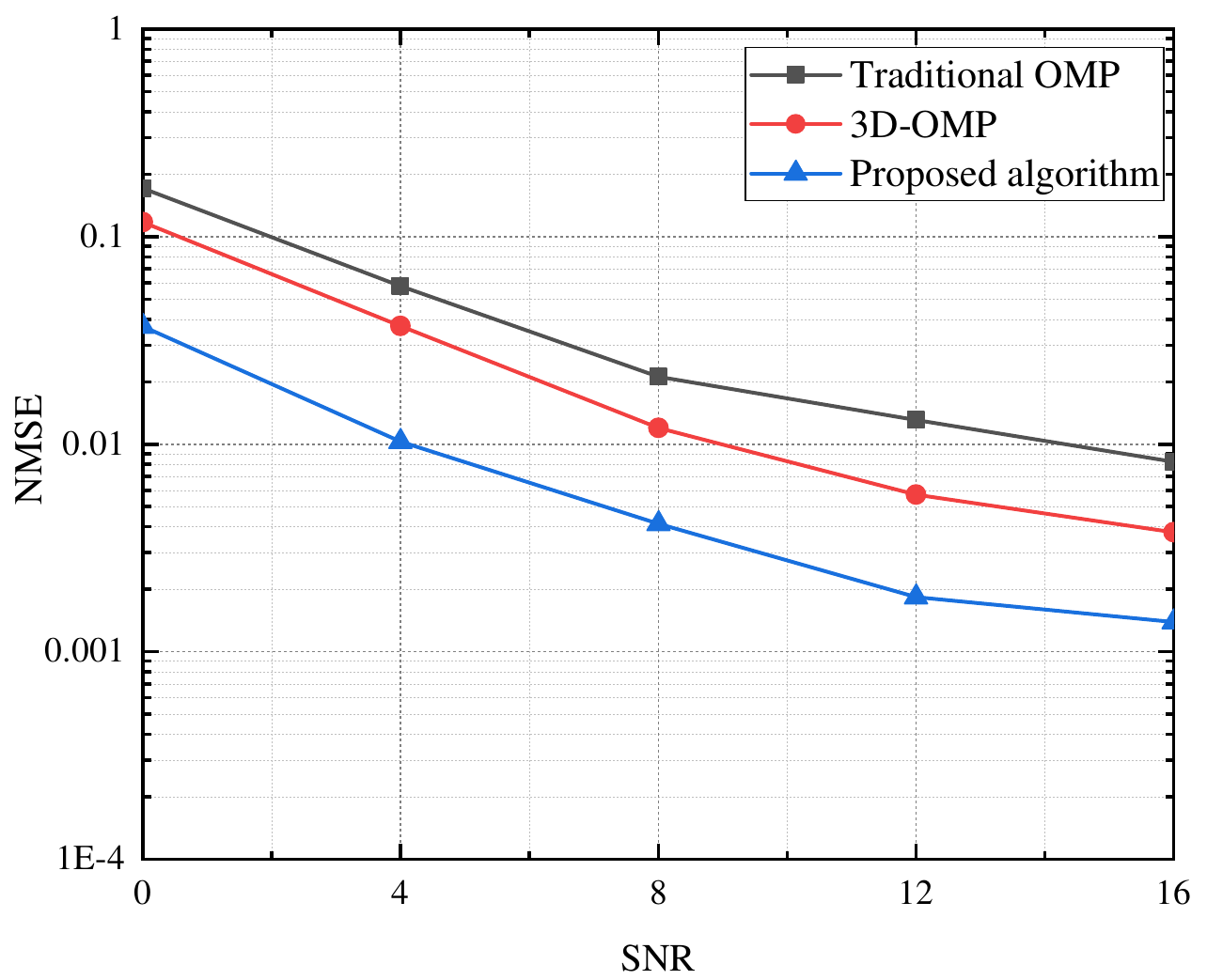}
		\caption{The NMSE performance for different SNRs}
\label{fig:NMSE_performance_vs_SNR}
	}
\end{figure}

In Fig.~\ref{fig:NMSE_vs_speed}, the NMSE performance for different user speeds is illustrated. It is evident that the NMSE performance of the three algorithms remains almost stable across different user speeds, showcasing the robustness of the proposed ODDM modulation. Furthermore, the proposed channel estimation algorithm demonstrates notably superior NMSE performance when compared to the conventional OMP and 3D-OMP algorithms, underlining its effectiveness in diverse user speed scenarios.  This highlights the potential of the proposed algorithm for reliable and robust channel estimation in practical applications. 

\begin{figure}[tbp]
	{	\centering\includegraphics[width=2.7in]{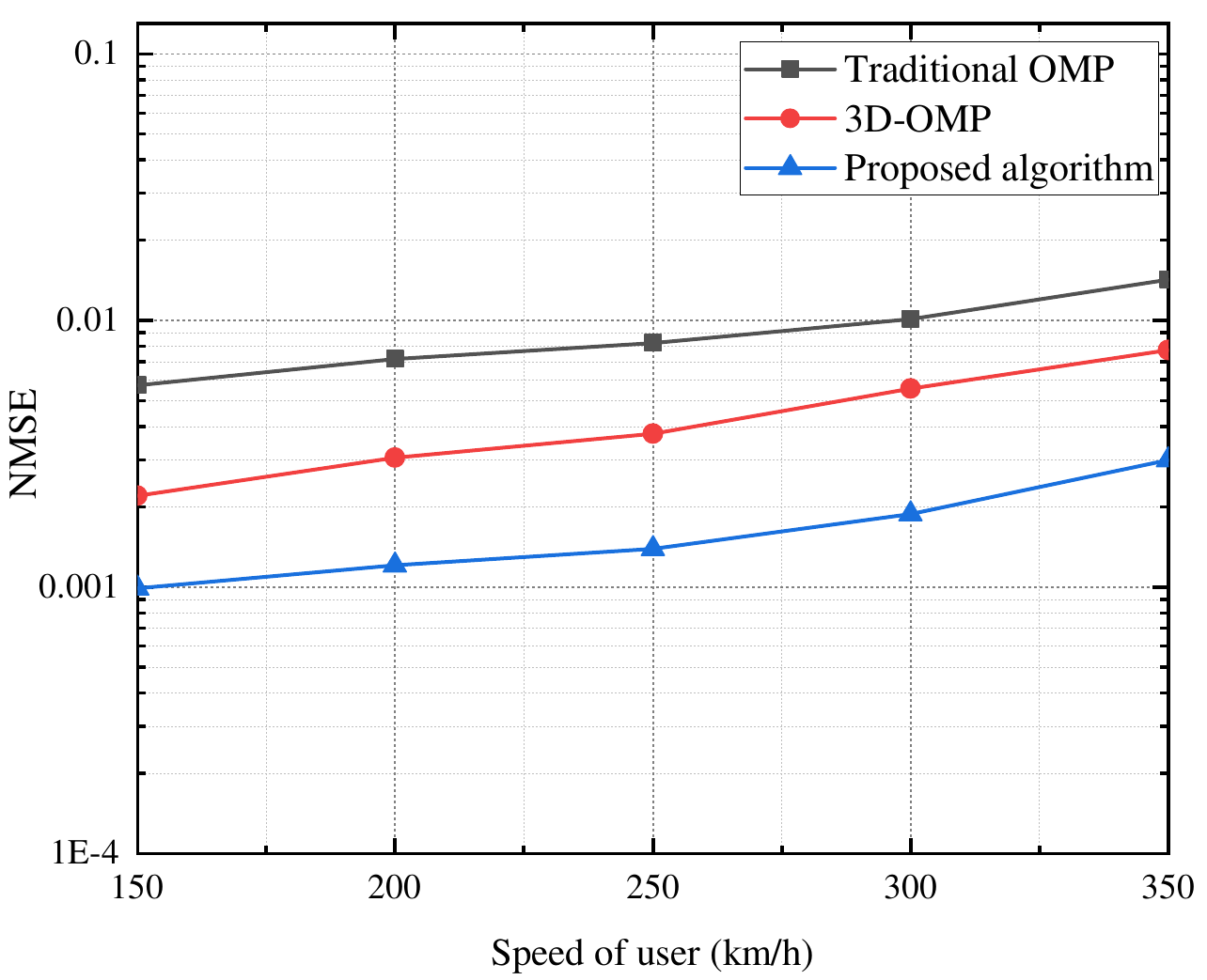}
		\caption{The NMSE performance for different speeds}
\label{fig:NMSE_vs_speed}
	}
\end{figure}

In Fig.~\ref{fig:NMSE_vs_antenna}, we evaluate the NMSE performance of the three algorithms with varying numbers of antennas. For both traditional OMP and 3D-OMP algorithms, the NMSE improves as the number of antennas increases. When the number of antennas exceeds 100, the NMSE performance of traditional OMP converges closely to that of 3D-OMP is close. This convergence can be attributed to the performance degradation in the 3D-OMP algorithm caused by an insufficient number of pilots. In contrast, the proposed algorithm demonstrates remarkable stability, with its NMSE performance showing no significant variation across different numbers of antennas. This consistency underscores the robustness and reliability of the proposed algorithm under different antenna configurations.

In Fig.~\ref{fig:NMSE_vs_paths}, we compare the NMSE performance for different numbers of paths. It can be seen that the NMSE increases with the increased number of paths. This phenomenon can be attributed to the growing estimation error when more paths are involved, which affects the NMSE performance across the three algorithms. Similarly, the proposed algorithm also outperforms the traditional OMP and OMP schemes. Based on the aforementioned simulation results, it can be observed that the proposed algorithm outperforms baseline algorithms in different scenarios. Therefore, it can be concluded that the proposed algorithm possesses better accuracy.

\begin{figure}[tbp]
	{	\centering\includegraphics[width=2.7in]{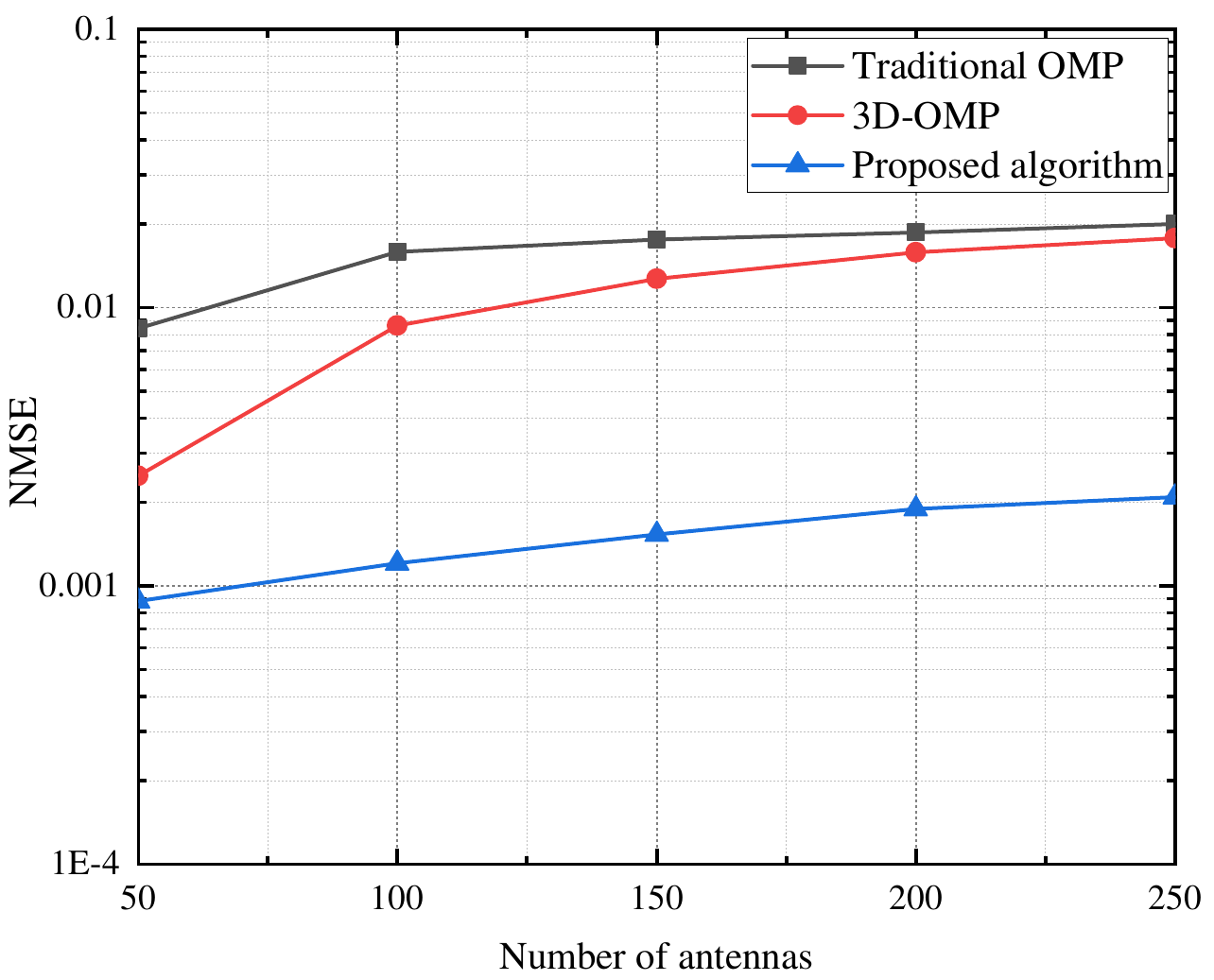}
		\caption{The NMSE performance for different number of antennas}
\label{fig:NMSE_vs_antenna}
	}
\end{figure}

\begin{figure}[tbp]
	{	\centering\includegraphics[width=2.7in]{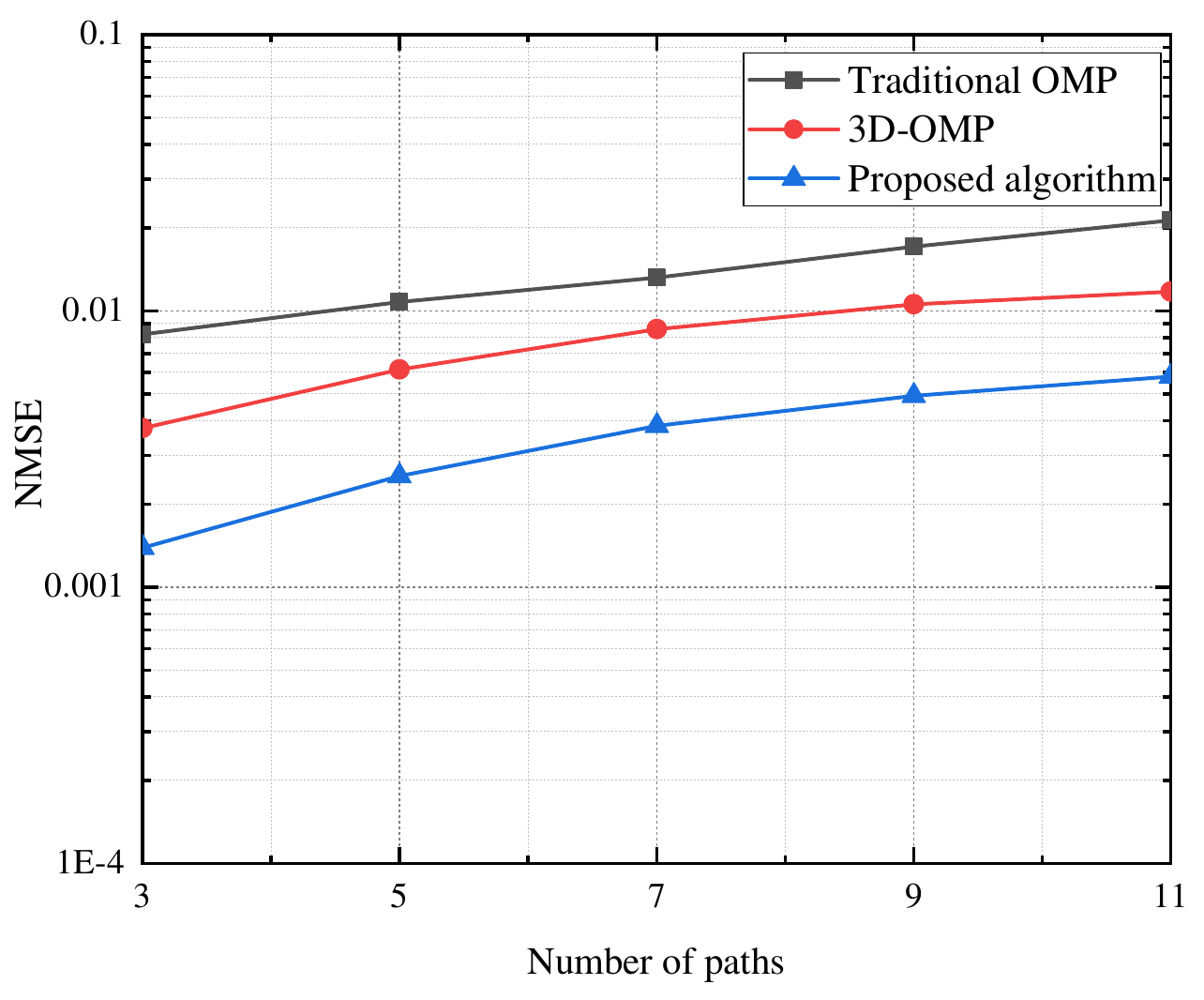}
		\caption{The NMSE performance for different number of paths}
		\label{fig:NMSE_vs_paths}
	}
\end{figure}

\subsection{Performance Analysis}
In this subsection, we delve into the performance evaluation of the proposed algorithm by conducting a comprehensive analysis. This includes an in-depth examination of its convergence behavior and a comparative study against traditional MAMP and OTFS modulation schemes. Through this analysis, we aim to highlight the advantages and potential improvements offered by the proposed algorithm in various operational scenarios.

We initially present the convergence analysis of the proposed algorithm in Fig.~\ref{fig:convergence analysis}. It is evident that the NMSE performance will gradually decrease with the number of iterations, ultimately converging towards a stable value. In scenarios featuring a higher number of antennas, the convergence of the proposed channel estimation algorithm will be slower attributed to the large amount of data. In addition, the proposed algorithm converges to almost the same result for different numbers of antennas.

\begin{figure}[tbp]
	{	\centering\includegraphics[width=2.7in]{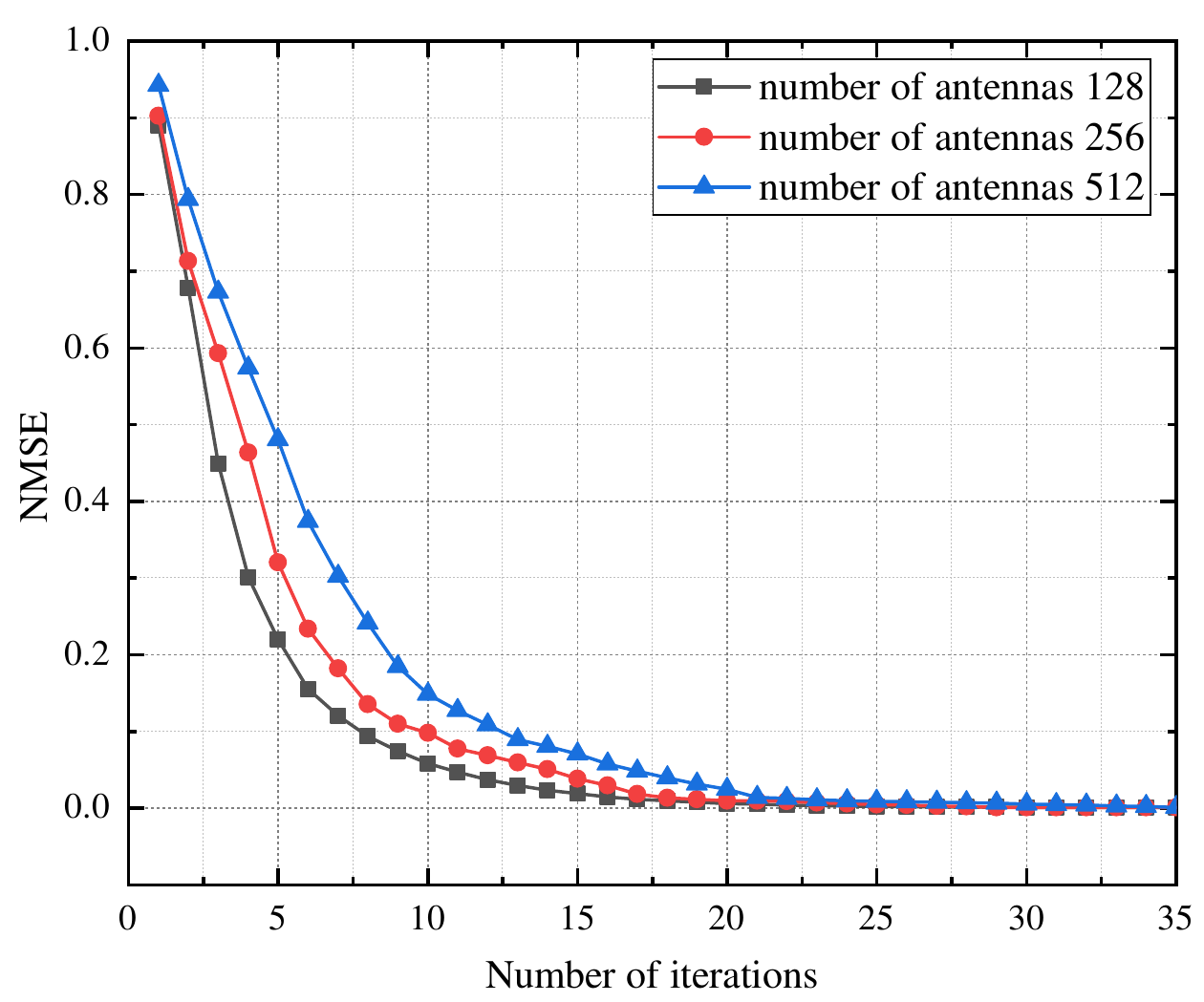}
		\caption{The convergence performance of the proposed algorithm}
\label{fig:convergence analysis}
	}
\end{figure}

\begin{figure}[tbp]
	{	\centering\includegraphics[width=2.7in]{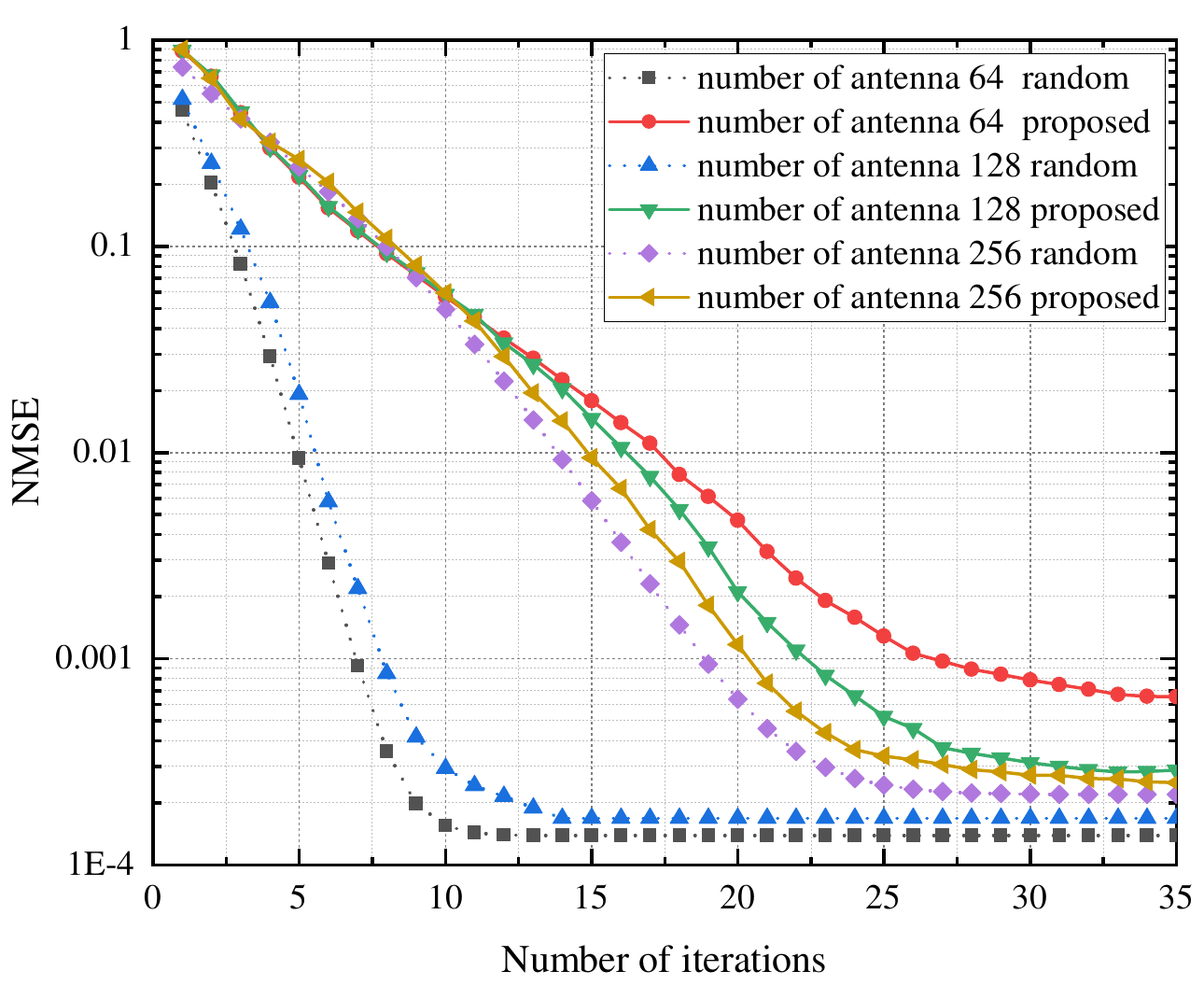}
		\caption{The performance comparison with the traditional MAMP scheme}
\label{fig:performance_comparision_original_MAMP}
	}
\end{figure}

\begin{figure}[tbp]
	{	\centering\includegraphics[width=2.7in]{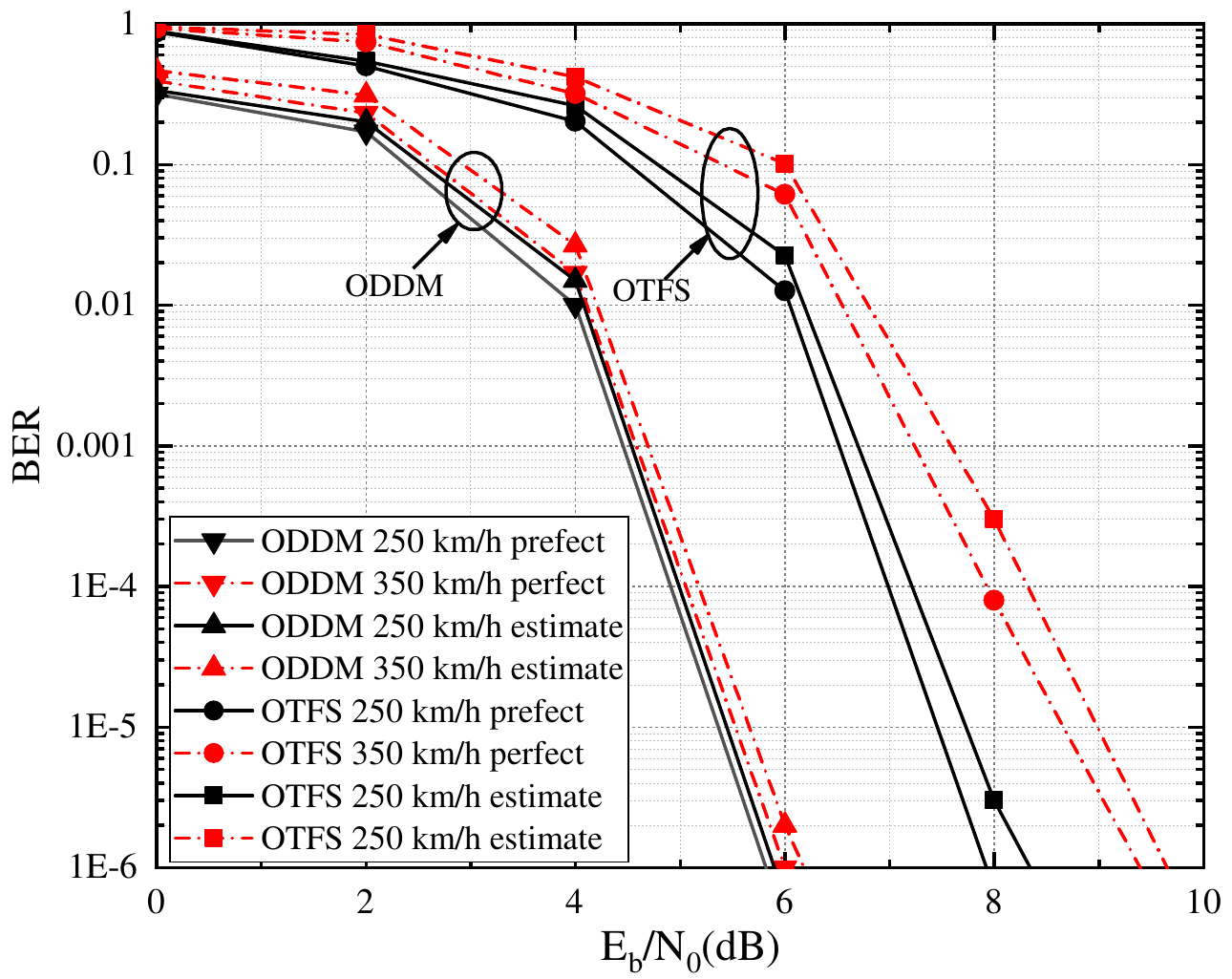}
		\caption{The BER performance versus SNRs for different speeds}
		\label{fig:BER_different_speed}
	}
\end{figure}

\begin{figure}[tbp]
	{	\centering\includegraphics[width=2.7in]{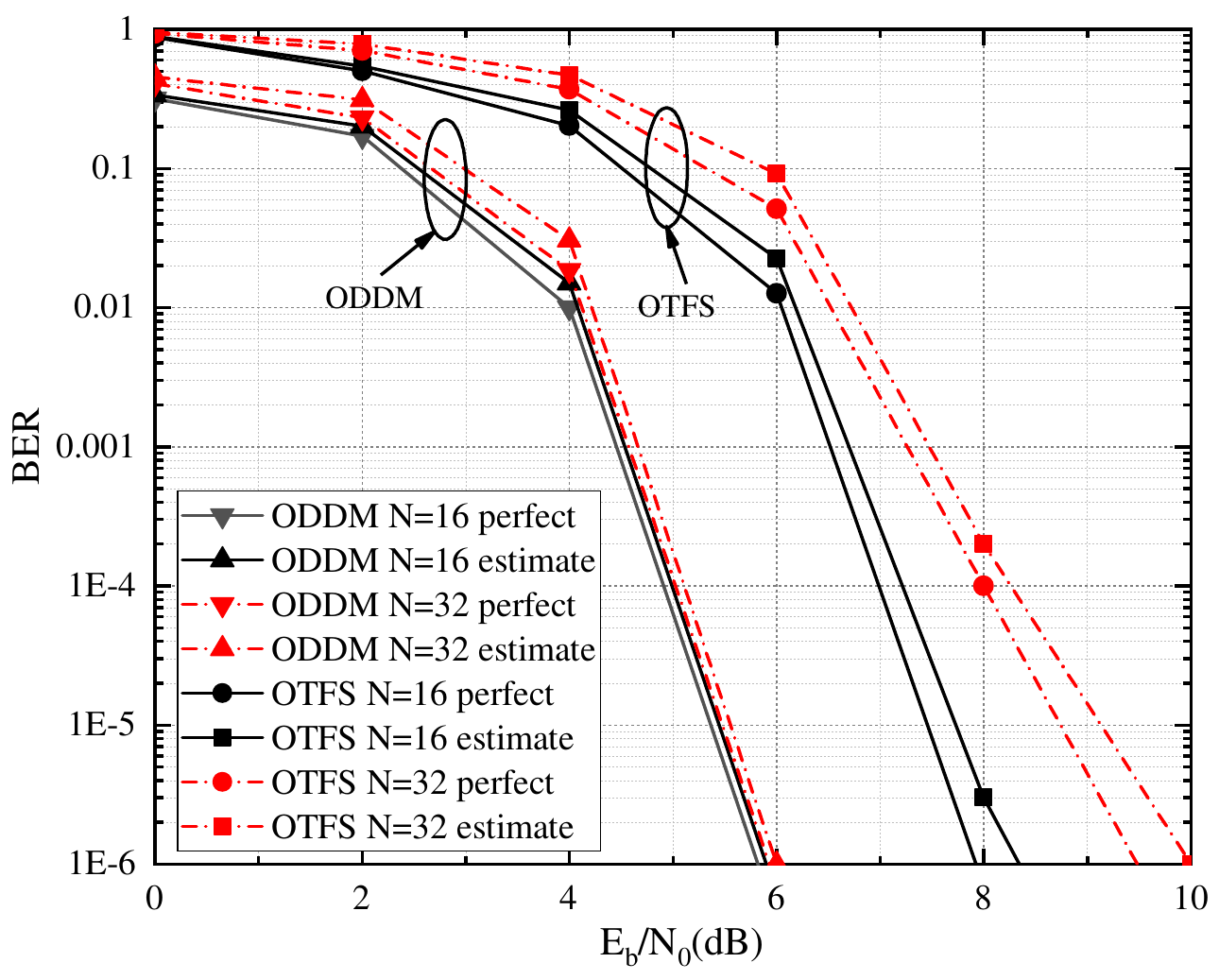}
		\caption{The BER performance versus SNRs for different numbers of subcarriers}
    \label{fig:BER_different_subcarriers}
	}
\end{figure}

In Fig.~\ref{fig:performance_comparision_original_MAMP}, we compare the NMSE performance of the proposed algorithm with the traditional MAMP scheme across different numbers of antennas. In the traditional MAMP scheme, the elements in the effective channel matrix are entirely randomly generated and can achieve Bayesian optimal results. From the results, we can see that the NMSE increases slowly with the increased number of antennas in the traditional MAMP scheme.  It is evident that as the number of antennas increases, the NMSE performance of the proposed algorithm converges towards that of the traditional MAMP scheme. This convergence trend implies that, as the number of antennas approaches infinity, the effective channel matrix can be approximated as being completely random. Consequently, the proposed algorithm demonstrates robustness and efficiency in scenarios with different numbers of antennas, eventually aligning closely with the Bayesian optimal results.

Finally, we present the bit-error-rate~(BER) comparison for different SNRs under different speeds, numbers of subcarriers, different modulations, and different channels. In this scenario, we consider the scenarios that the CSI obtained by the proposed algorithm and perfect CSI, respectively, and the symbol detection is performed by employing the orthogonal approximate message passing~(OAMP) method, which can also achieve Bayes-optimal~\cite{ma2017orthogonal}.  

In Figs.~\ref{fig:BER_different_speed} and~\ref{fig:BER_different_subcarriers}, we consider the BER performance for different speeds and subcarriers. It is evident that the BER rises with the increasing speeds while decreasing subcarriers for both ODDM modulation and OTFS modulation. Compared with the OTFS modulation, the ODDM can achieve better BER performance and robustness for different speeds and subcarriers. As shown in the figures, the BER performance also degrades for perfect CSI. This is because the rapid change of CSI makes it no longer "perfect" during the data transmission. When the speed increases, the Doppler shift increases, causing significant changes in channel characteristics within a short time. Further, the perfect CSI can achieve better BER performance, while the difference between the estimated CSI and the perfect CSI is too small. Combining the results in Figs.~9 and~10, we can conclude that the proposed algorithm achieves good performance and the ODDM modulation is robust for different scenarios.

The BER performance for higher-order modulations is presented in Fig.~\ref{fig:BER_different_modulation}. It can be seen that the higher the modulation order, the greater the corresponding BER, and the worse the signal detection performance. This is because the constellation points of higher-order modulations (such as 16QAM and 64QAM) have a smaller spacing, resulting in a decrease in the anti-noise ability. Moreover, they are more sensitive to channel distortion and synchronization errors, which leads to the signal detection performance being significantly inferior to that of lower-order modulations.
		\begin{figure}[tbp]
		{	\centering\includegraphics[width=2.7in]{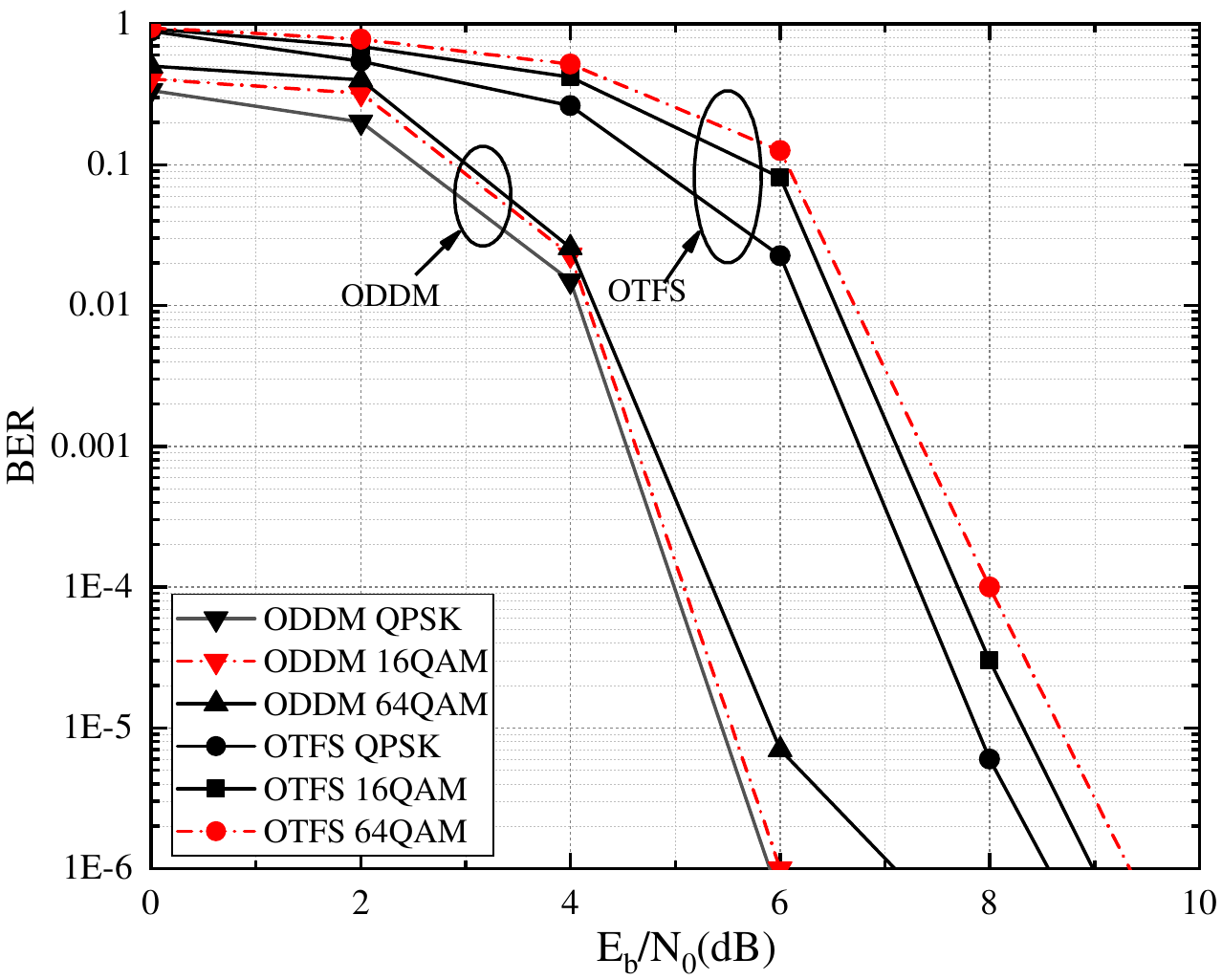}
			\caption{The BER performance versus SNRs for different modulations}
			\label{fig:BER_different_modulation}
		}
	\end{figure}

\begin{figure}[tbp]
	{	\centering\includegraphics[width=2.7in]{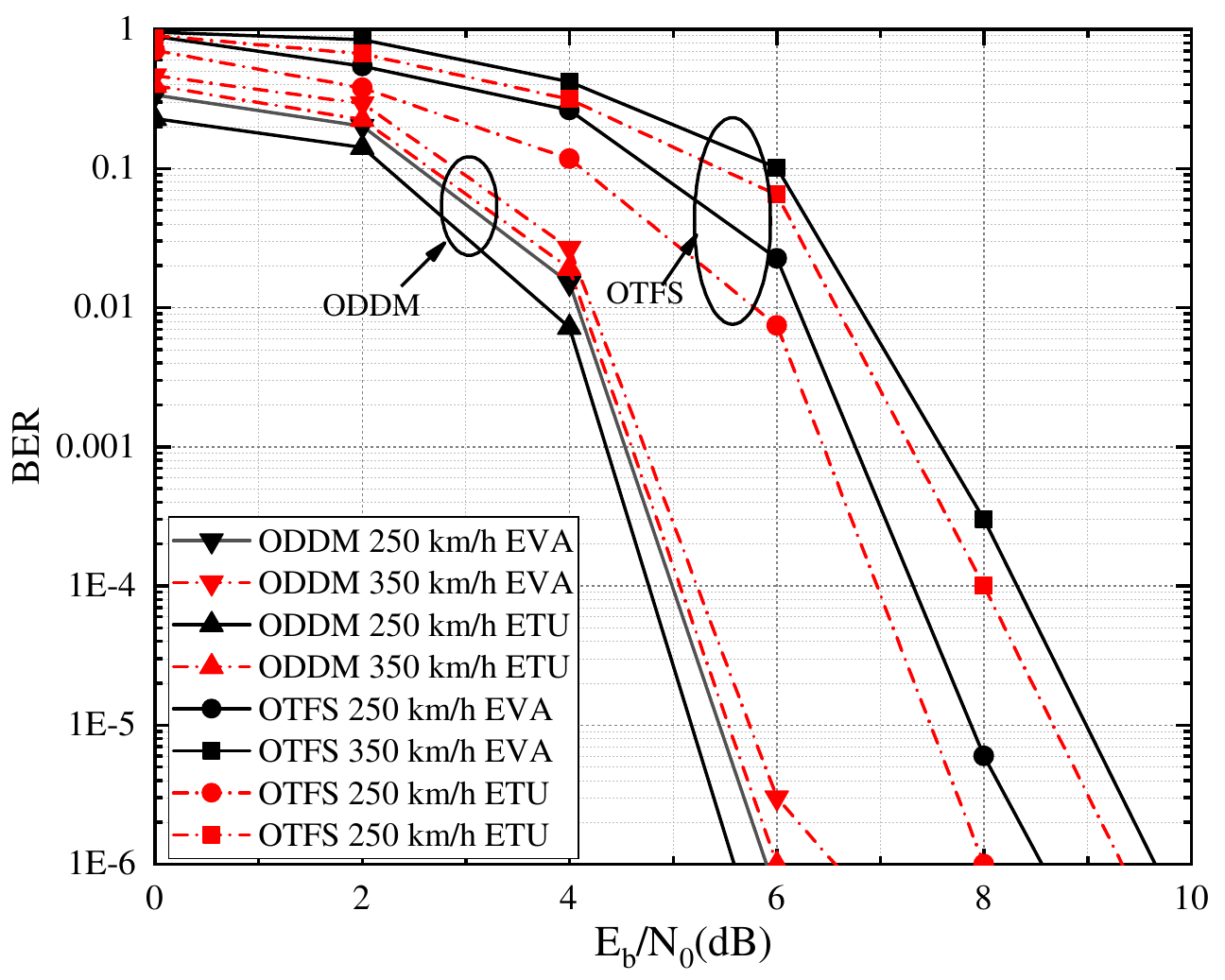}
		\caption{The BER performance versus SNRs for different channel models}
    \label{fig:BER_different_scenarios}
	}
\end{figure}
We also compare the BER performance for EVA and ETU channels in Fig.~\ref{fig:BER_different_scenarios}. It can be seen that compared with the EVA channel, the BER is lower and the symbol detection performance is better under the ETU channel. The reason is that the ETU channel has milder channel characteristic with relatively smaller Doppler spread and moderate multipath delay spread, enabling the ODDM and OTFS systems to achieve better performance.

Finally, in Fig.~\ref{fig:BER_5G_NR}, we compare the performance of ODDM modulation with OFDM modulation in 5G-NR OFDM system.  It can be seen that when the moving speed is low, the performance of ODFM modulation is quite close to that of ODDM modulation. However, in high-speed scenarios, the performance of OFDM modulation is inferior to that of ODDM modulation. This also indirectly reflects that ODDM modulation is capable of combating the performance degradation caused by high-speed movement scenarios. 
\begin{figure}[tbp]
		{	\centering\includegraphics[width=2.7in]{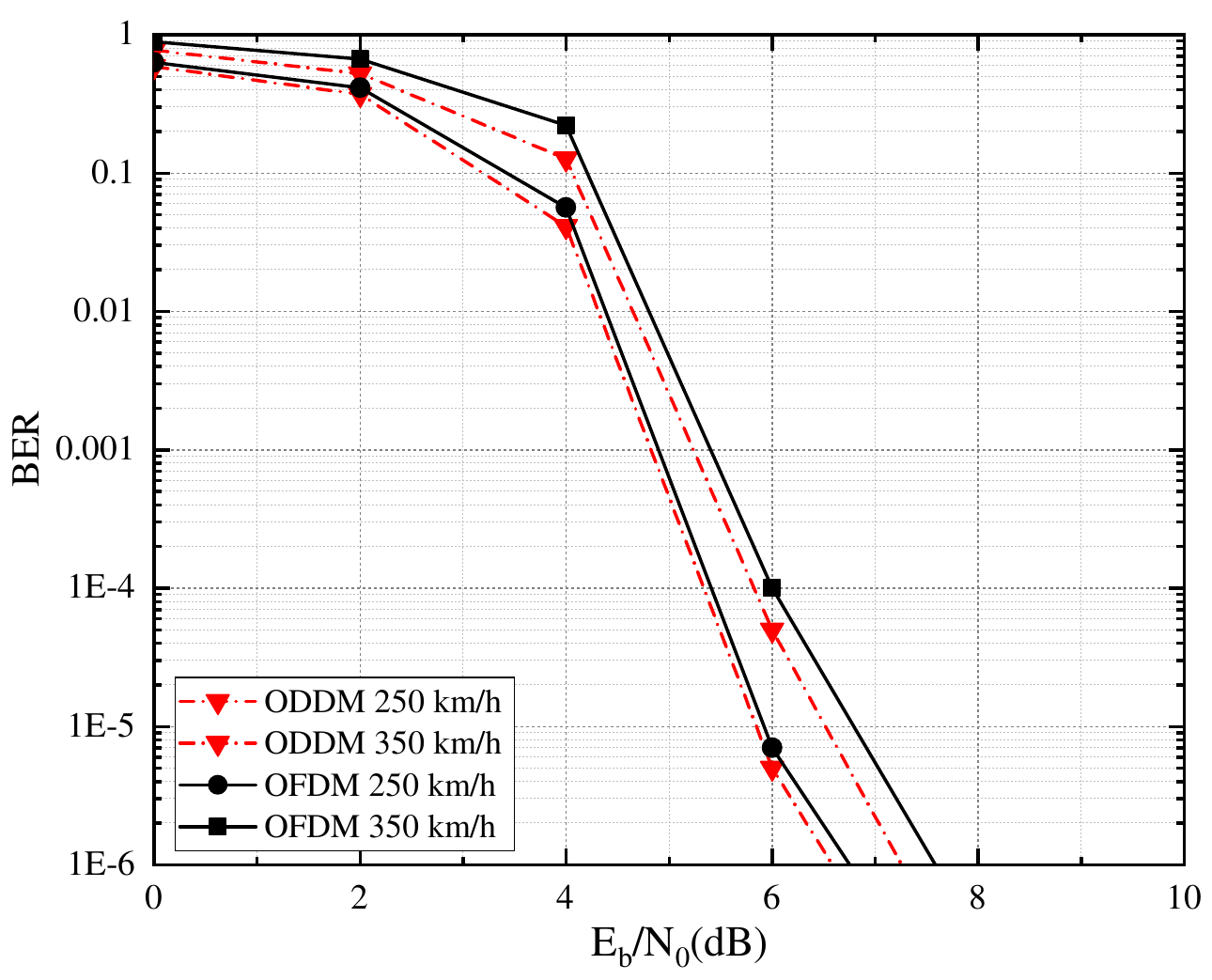}
			\caption{The BER performance versus SNRs in 5G-NR OFDM system}
			\label{fig:BER_5G_NR}
		}
\end{figure}
\section{Conclusion}
In this paper, we delved into the channel estimation problem characterizing massive MIMO-ODDM systems. Since the channel changed rapidly, the channel estimation algorithm should be with low complexity. Therefore, we proposed a two-stage algorithm for the massive MIMO-ODDM systems. Initially, we derived the effective channel model for the system and conducted an in-depth analysis of the characteristics of the effective matrix and channel vector. Based on these characteristics, we first utilized the MAMP technique to precisely identify essential parameters such as channel gain, delay, and Doppler shifts, while the second state employed the DFT method for accurate angle estimation.  The effectiveness of the proposed algorithm was validated through numerical results, which indicates the proposed channel estimation algorithm can approach the Bayesian optimal results as the number of antennas approaches infinity and improve the channel estimation accuracy by about 30\% compared with the existing algorithms in terms of the NMSE metric.

In our future work, we will consider the following possible lead:
			\begin{itemize}
				\item Integrated sensing and communication~(ISAC): ISAC has been seen as a key scenario in the future wireless network. The ODDM modulation has the potential to achieve high reliable communication and high precision sensing. Combining the ODDM modulation with the ISAC system and realizing a balance between sensing and communication performance remains an issue that needs to be addressed.
				\item Scalable multiple access: How to support multiple users in a high-mobility environment is challenging. ODDM modulation provides an opportunity to accommodate the users in the DD domain. Nevertheless, the problem of designing strategies to scale systems for handling multiple users with minimal overhead remains an open topic.
				\item Coded-ODDM: Channel coding is a crucial technique for achieving ultra-reliable communications. The existing literature lacks a dedicated modern channel code design for ODDM systems that can effectively optimize diversity gain using a practical detector and decoder. Therefore, exploring joint detection and decoding strategies for coded-ODDM systems could be a promising approach to fully realize the potential of ODDM.
		\end{itemize}
\appendices

\section{Derivation of Input-Output Relationship}\label{Section_IO_relation}

To derive the input-output relationship, we first extend~\eqref{initial_received_signal} to
 \begin{eqnarray}
&&Y[m,n]=\int e^{-j2\pi n\frac{1}{NT}\Big(t-m\frac{T}{M}\Big)}g_{rx}^*\Big(t-m\frac{T}{M}\Big)y(t)dt\nonumber\\
&&=\int e^{-j2\pi n\frac{1}{NT}\Big(t-m\frac{T}{M}\Big)}g_{rx}^*\Big(t-m\frac{T}{M}\Big)\nonumber\\
&&~~\times\sum_{p=1}^Ph_px (t-\tau_p)e^{j2\pi \nu_p(t-\tau_p)}dt\nonumber\\
&&=\int e^{-j2\pi n\frac{1}{NT}\Big(t-m\frac{T}{M}\Big)}g_{rx}^*\Big(t-m\frac{T}{M}\Big)\nonumber\\
&&~~\times\sum_{p,m',n'}h_pX[m',n']g_{tx}\Big(t-m'\frac{T}{M}-\tau_p\Big)\nonumber\\
&&~~\times e^{j2\pi\Big(\frac{n'}{NT}\Big(t-m'\frac{T}{M}-\tau_p\Big)+\nu_p(t-\tau_p)\Big)}dt\nonumber\\
&&=\int\sum_{p,m',n'}h_p X[m',n']e^{j2\pi\frac{l_p}{NT}(t-k_p\frac{T}{M})}\nonumber\\
&&~~\times g_{rx}^*\Big(t-m\frac{T}{M}\Big)e^{-j2\pi \frac{n}{NT}\Big(t-m\frac{T}{M}\Big)}\nonumber\\
&&~~\times g_{tx}\Big(t-(m'+k_p)\frac{T}{M}\Big)e^{j2\pi\frac{n'}{NT}\Big(t-(m'+k_p)\frac{T}{M}\Big)}dt.\nonumber\\
\end{eqnarray}
where $\tau_p\overset{\triangle}{=}k_p\frac{T}{M}, \nu_p\overset{\triangle}{=}l_p\frac{1}{NT}$. Then, define $t'=t-m\frac{T}{M}$, and based on the orthogonality in~\eqref{orthogonality}, we have
\begin{eqnarray}
&&Y[m,n]=\int\sum_{p,m',n'}h_p X[m',n']e^{j2\pi\frac{l_p}{NT}\big(t'+(m-k_p)\frac{T}{M}\big)}\nonumber\\
&&~~\times g_{rx}^*(t')e^{-j2\pi \frac{n}{NT}t'}g_{tx}\Big(t'-(m'-m+k_p)\frac{T}{M}\Big)\nonumber\\
&&~~\times e^{j2\pi\frac{n'}{NT}\big(t'-(m'-m+k_p)\frac{T}{M}\big)}dt\nonumber\\
&&=\sum_{p,m',n'}h_p X[m',n']e^{j2\pi\frac{l_p}{MN}(m-k_p)}\nonumber\\
&&~~\times\int g_{rx}^*(t')g_{tx}\Big(t'-(m'-m+k_p)\frac{T}{M}\Big)\nonumber\\
&&~~\times e^{-j2\pi\frac{n-l_p-n'}{NT}\big(t'-(m'-m+k_p)\frac{T}{M}\big)} dt\nonumber\\
&&=\sum_{p,m',n'}h_p X[m',n']e^{j2\pi\frac{l_p}{MN}(m-k_p)}\nonumber\\
&&~~\times\delta(m'-m+k_p)\delta(n-l_p-n').
\end{eqnarray}
When $m-k_p<0$, the $(m-k_p)$-th symbol is a $T$ cyclic time-shift of the $(M+m-k_p)$-th symbol. A $T$ cyclic time-shift of an ODDM symbol with subcarrier spacing 
$1/NT$ corresponds to a phase rotation term $e^{\frac{-j2\pi n'NT}{T}}=e^{-j2\pi n'N}$ in the frequency domain, where $\hat{n}=[n-l_p]_N$. Thus, we have

\begin{eqnarray}
&&Y[m,n]=\nonumber\\
&&\left\{
\begin{matrix}
\begin{aligned}
&\sum_ph_pX[m-k_p,\hat{n}]e^{j2\pi\frac{l_p}{MN}(m-k_p)}, \\
&~~~~~~~~~~~~~~~~~~~~~\textrm{if}~m-k_p\geq 0,\\ 
&\sum_ph_pX[M+m-k_p,\hat{n}]e^{j2\pi\left(\frac{l_p}{MN}(m-k_p)-\frac{\hat{n}}{N}\right)}, \\ 
&~~~~~~~~~~~~~~~~~~~~~\textrm{if}~m-k_p<0,
\end{aligned}
\end{matrix}
\right.
\end{eqnarray}
which completes the proof.

\section{Derivation of the Output of the NLE}
From Bayes’ rule, we have 
\begin{align}
    {\rm Pr}(b = 0 \!\mid\! y) &\propto {\rm Pr}(y \!\mid\! b=0) {\rm Pr}(b = 0) \nonumber \\
    &\propto (1-p) {\rm Pr}(y = bg + n_c \!\mid\! b=0) \nonumber \\
    &\propto (1-p) {\rm Pr}(y = n_c) \nonumber \\
    &\propto \frac{1-p}{\sqrt{\delta}} {\rm exp}\Big(-\frac{r^2}{2\delta}\Big), \label{Eqn:5} \\
    {\rm Pr}(b = 1 \!\mid\! y) &\propto {\rm Pr}(y \!\mid\! b=1) {\rm Pr}(b = 1) \nonumber \\
    &\propto p {\rm Pr}(y = bg + n_c \!\mid\! b=1) \nonumber \\
    &\propto p {\rm Pr}(y = g + n_c) \nonumber \\
    &\propto \frac{p}{\sqrt{v_g + \delta}} {\rm exp}\Big(-\frac{(y-u_g)^2}{2(v_g + \delta)}\Big). \label{Eqn:6} 
\end{align}
By normalization, we have 
\begin{eqnarray}
  &&   p_{\rm post}= {\rm Pr}(b=1 \!\mid\! y) \nonumber\\
&&    = \frac{p}{p+c(1-p)\exp(d)},   
\end{eqnarray}
where $c = \sqrt{1+\frac{v_g}{\delta}},\ d = \frac{(y-u_g)^2}{2(\delta+v_g)} - \frac{y^2}{2\delta^2}$. Then, from Bayes’ rule, we have
\begin{align}
    {\rm Pr}(g \!\mid\! y, b=1) &\propto {\rm Pr}(y, b=1 \!\mid\! g){\rm Pr}(g) \\
    &\propto {\rm Pr}(y = g + n_c \!\mid\! g){\rm Pr}(g) \\
    &\propto {\rm exp}\Big(-\frac{(y-g)^2}{2\delta}-\frac{(g-u_g)^2}{2v_g}\Big),
\end{align}
which follows a Gaussian ${\cal N}(\hat{u}_{\rm g}, \hat{v}_{\rm g})$. It is easy to verify $\hat{v}_{\rm g} = (v_g^{-1} + \delta^{-1})^{-1}, \hat{u}_{\rm g} = \hat{v}_{\rm g} (v_g^{-1}u_g + \delta^{-1}y)$. Thus, we have $x_{\rm post} = p_{\rm post}\hat{u}_g, v_{\rm post} = p_{\rm post}(1-p_{\rm post})\hat{u}_g^2 + p_{\rm post}\hat{v}_g$, which completes the proof.

\bibliographystyle{IEEEtran}
\bibliography{sample}

\begin{thebibliography}{10}
\providecommand{\url}[1]{#1}
\csname url@samestyle\endcsname
\providecommand{\newblock}{\relax}
\providecommand{\bibinfo}[2]{#2}
\providecommand{\BIBentrySTDinterwordspacing}{\spaceskip=0pt\relax}
\providecommand{\BIBentryALTinterwordstretchfactor}{4}
\providecommand{\BIBentryALTinterwordspacing}{\spaceskip=\fontdimen2\font plus
\BIBentryALTinterwordstretchfactor\fontdimen3\font minus \fontdimen4\font\relax}
\providecommand{\BIBforeignlanguage}[2]{{%
\expandafter\ifx\csname l@#1\endcsname\relax
\typeout{** WARNING: IEEEtran.bst: No hyphenation pattern has been}%
\typeout{** loaded for the language `#1'. Using the pattern for}%
\typeout{** the default language instead.}%
\else
\language=\csname l@#1\endcsname
\fi
#2}}
\providecommand{\BIBdecl}{\relax}
\BIBdecl

\bibitem{chen20235g}
W.~Chen, X.~Lin, J.~Lee, A.~Toskala, S.~Sun, C.~F. Chiasserini, and L.~Liu, ``{5G}-advanced toward {6G}: Past, present, and future,'' \emph{IEEE J. Sel. Areas Commun.}, vol.~41, no.~6, pp. 1592--1619, Jun. 2023.

\bibitem{masaracchia2021uav}
A.~Masaracchia, Y.~Li, K.~K. Nguyen, C.~Yin, S.~R. Khosravirad, D.~B. Da~Costa, and T.~Q. Duong, ``{UAV}-enabled ultra-reliable low-latency communications for {6G}: A comprehensive survey,'' \emph{IEEE Access}, vol.~9, pp. 137\,338--137\,352, Oct. 2021.

\bibitem{ahn2024sensing}
Y.~Ahn, J.~Kim, S.~Kim, S.~Kim, and B.~Shim, ``Sensing and computer vision-aided mobility management for {6G} millimeter and terahertz communication systems,'' \emph{IEEE Trans. Commun.}, vol.~72, no.~10, pp. 6044 -- 6058, Oct. 2024.

\bibitem{hwang2008ofdm}
T.~Hwang, C.~Yang, G.~Wu, S.~Li, and G.~Y. Li, ``{OFDM} and its wireless applications: A survey,'' \emph{IEEE Trans. Veh. Technol.}, vol.~58, no.~4, pp. 1673--1694, May 2009.

\bibitem{zhang2020joint}
F.~Zhang, Z.~Zhang, W.~Yu, and T.-K. Truong, ``Joint range and velocity estimation with intrapulse and intersubcarrier {Doppler} effects for {OFDM}-based {RadCom} systems,'' \emph{IEEE Trans. Signal Process.}, vol.~68, pp. 662--675, Jan. 2020.

\bibitem{khan2024mobility}
M.~J. Khan, R.~C.~S. Chauhan, I.~Singh, Z.~Fatima, and G.~Singh, ``Mobility management in heterogeneous network of vehicular communication with {5G}: Current status and future perspectives,'' \emph{IEEE Access}, vol.~12, pp. 86\,271 -- 86\,292, Jun. 2024.

\bibitem{wang2006performance}
T.~Wang, J.~G. Proakis, E.~Masry, and J.~R. Zeidler, ``Performance degradation of {OFDM} systems due to {Doppler} spreading,'' \emph{IEEE Trans. Wireless Commun.}, vol.~5, no.~6, pp. 1422--1432, Jun. 2006.

\bibitem{nagaraj2024inter}
S.~Nagaraj and M.~Sarkar, ``Inter-carrier interference reduction technique for backscattered {OFDM},'' \emph{IEEE Trans. Veh. Technol.}, vol.~73, no.~8, pp. 12\,237--12\,241, Aug. 2024.

\bibitem{hadani2017orthogonal}
R.~Hadani, S.~Rakib, M.~Tsatsanis, A.~Monk, A.~J. Goldsmith, A.~F. Molisch, and R.~Calderbank, ``Orthogonal time frequency space modulation,'' in \emph{Proc. IEEE Wireless Commun. Netw. Conf. (WCNC)}, San Francisco, CA, USA, May 2017.

\bibitem{ronny2018orthogonal}
H.~Ronny, R.~Shlomo, K.~Shachar, T.~Michael, M.~Anton, I.~Christian, D.~Jim, H.~Yoav, J.~G. Andrea, F.~M. Andreas, and C.~Robert, ``Orthogonal time frequency space modulation,'' \emph{[Online] available:https://arxiv.org/abs/1808.00519}, Aug. 2018.

\bibitem{lin2022orthogonal}
H.~Lin and J.~Yuan, ``Orthogonal delay-{Doppler} division multiplexing modulation,'' \emph{IEEE Trans. Wireless Commun.}, vol.~21, no.~12, pp. 11\,024--11\,037, Dec. 2022.

\bibitem{lin2023multi}
H.~Lin, J.~Yuan, W.~Yu, J.~Wu, and L.~Hanzo, ``Multi-carrier modulation: An evolution from time-frequency domain to delay-{Doppler} domain,'' \emph{[Online] available:https://arxiv.org/abs/2308.01802}, Aug. 2023.

\bibitem{wang2020exploring}
D.~Wang, C.~Huang, L.~Liu, X.~Chen, W.~Wang, Z.~Zhang, C.~Yuen, and M.~Debbah, ``Exploring channel estimation and signal detection for {ODDM}-based {ISAC} systems,'' \emph{IEEE Wireless Commun. Lett.}, vol.~99, no.~99, pp. 1--1, May 2024.

\bibitem{gan2022multiple}
X.~Gan, C.~Zhong, C.~Huang, Z.~Yang, and Z.~Zhang, ``Multiple {RISs} assisted cell-free networks with two-timescale {CSI}: Performance analysis and system design,'' \emph{IEEE Trans. Commun.}, vol.~70, no.~11, pp. 7696--7710, Nov. 2022.

\bibitem{wei2022multi}
L.~Wei, C.~Huang, G.~C. Alexandropoulos, W.~E.~I. Sha, Z.~Zhang, M.~Debbah, and C.~Yuen, ``Multi-user holographic {MIMO} surfaces: Channel modeling and spectral efficiency analysis,'' \emph{IEEE J. Sel. Topics in Signal Process.}, vol.~16, no.~5, pp. 1112--1124, May 2022.

\bibitem{Cui2022channel}
M.~Cui and L.~Dai, ``Channel estimation for extremely large-scale {MIMO}: Far-field or near-field?'' \emph{IEEE Trans. Commun.}, vol.~70, no.~4, pp. 2663--2677, Apr. 2022.

\bibitem{wei2021channel}
L.~Wei, C.~Huang, G.~C. Alexandropoulos, C.~Yuen, Z.~Zhang, and M.~Debbah, ``Channel estimation for {RIS}-empowered multi-user {MISO} wireless communications,'' \emph{IEEE Trans. Commun.}, vol.~69, no.~6, pp. 4144--4157, Jun. 2021.

\bibitem{qin2018time}
Q.~Qin, L.~Gui, P.~Cheng, and B.~Gong, ``Time-varying channel estimation for millimeter wave multiuser {MIMO} systems,'' \emph{IEEE Trans. Veh. Technol.}, vol.~67, no.~10, pp. 9435--9448, Oct. 2018.

\bibitem{gan2021ris}
X.~Gan, C.~Zhong, C.~Huang, and Z.~Zhang, ``{RIS}-assisted multi-user {MISO} communications exploiting statistical {CSI},'' \emph{IEEE Trans. Communications}, vol.~69, no.~10, pp. 6781--6792, Jul. 2021.

\bibitem{huang2021mimo}
S.~Huang, M.~Zhang, Y.~Gao, and Z.~Feng, ``{MIMO} radar aided mmwave time-varying channel estimation in {MU-MIMO V2X} communications,'' \emph{IEEE Trans. Wireless Commun.}, vol.~20, no.~11, pp. 7581--7594, Jun. 2021.

\bibitem{balevi2020massive}
E.~Balevi, A.~Doshi, and J.~G. Andrews, ``Massive {MIMO} channel estimation with an untrained deep neural network,'' \emph{IEEE Trans. Wireless Commun.}, vol.~19, no.~3, pp. 2079--2090, Mar. 2020.

\bibitem{han2020channel}
Y.~Han, S.~Jin, C.-K. Wen, and X.~Ma, ``Channel estimation for extremely large-scale massive {MIMO} systems,'' \emph{IEEE Wireless Commun. Lett.}, vol.~9, no.~5, pp. 633--637, May 2020.

\bibitem{huang2018iterative}
C.~Huang, L.~Liu, C.~Yuen, and S.~Sun, ``Iterative channel estimation using {LSE} and sparse message passing for {mmWave} {MIMO} systems,'' \emph{IEEE Trans. Signal Process.}, vol.~67, no.~1, pp. 245--259, Nov. 2018.

\bibitem{cui2022near}
M.~Cui, Z.~Wu, Y.~Lu, X.~Wei, and L.~Dai, ``Near-field {MIMO} communications for {6G}: Fundamentals, challenges, potentials, and future directions,'' \emph{IEEE Commun. Mag.}, vol.~61, no.~1, pp. 40--46, 2022.

\bibitem{lin2021channel}
Y.~Lin, S.~Jin, M.~Matthaiou, and X.~You, ``Channel estimation and user localization for {IRS}-assisted {MIMO-OFDM} systems,'' \emph{IEEE Trans. Wireless Commun.}, vol.~21, no.~4, pp. 2320--2335, Sep. 2021.

\bibitem{meng2024joint}
K.~Meng, S.~Yang, X.-Y. Wang, Y.~Bu, Y.~Tang, J.~Zhang, and L.~Hanzo, ``Joint sparsity pattern learning based channel estimation for massive {MIMO-OTFS} systems,'' \emph{IEEE Trans. Veh. Technol.}, vol.~73, no.~24, pp. 12\,189--12\,194, Aug. 2024.

\bibitem{zhang2024sparse}
Y.~Zhang, Q.~Zhang, C.~He, L.~Jing, T.~Zheng, and C.~Yuen, ``Sparse bayesian learning approach for {OTFS} channel estimation with fractional {Doppler},'' \emph{IEEE Trans. Veh. Technol.}, vol.~73, no.~11, pp. 16\,846--16\,860, Jun. 2024.

\bibitem{wen2024MF}
H.~Wen, W.~Yuan, C.~Yuen, and Y.~Li, ``{MF-OAMP}-based joint channel estimation and data detection for {OTFS} systems,'' \emph{IEEE Trans. Veh. Technol.}, vol.~73, no.~2, pp. 2948--2953, Feb. 2024.

\bibitem{zhao2020aoa}
Y.~Zhao, W.~Zhao, G.~Wang, B.~Ai, H.~H. Putra, and B.~Juliyanto, ``{AoA}-based channel estimation for massive {MIMO OFDM} communication systems on high speed rails,'' \emph{China Commun.}, vol.~17, no.~3, pp. 90--100, Mar. 2020.

\bibitem{shen2019channel}
W.~Shen, L.~Dai, J.~An, P.~Fan, and R.~W. Heath, ``Channel estimation for orthogonal time frequency space ({OTFS}) massive mimo,'' \emph{IEEE Trans. Signal Process.}, vol.~67, no.~16, pp. 4204--4217, May 2019.

\bibitem{shi2021deterministic}
D.~Shi, W.~Wang, L.~You, X.~Song, Y.~Hong, X.~Gao, and G.~Fettweis, ``Deterministic pilot design and channel estimation for downlink massive {MIMO-OTFS} systems in presence of the fractional {Doppler},'' \emph{IEEE Trans. Wireless Commun.}, vol.~20, no.~11, pp. 7151--7165, Nov. 2021.

\bibitem{wei2022off}
Z.~Wei, W.~Yuan, S.~Li, J.~Yuan, and D.~W.~K. Ng, ``Off-grid channel estimation with sparse bayesian learning for {OTFS} systems,'' \emph{IEEE Trans. Wireless Commun.}, vol.~21, no.~9, pp. 7407--7426, Sep. 2022.

\bibitem{chi2024interleave}
Y.~Chi, L.~Liu, Y.~Ge, X.~Chen, Y.~Li, and Z.~Zhang, ``Interleave frequency division multiplexing,'' \emph{IEEE Wireless Commun. Lett.}, vol.~13, no.~7, pp. 1963--1967, May 2024.

\bibitem{wang2022joint}
X.~Wang, W.~Shen, C.~Xing, J.~An, and L.~Hanzo, ``Joint {Bayesian} channel estimation and data detection for {OTFS} systems in {LEO} satellite communications,'' \emph{IEEE Trans. Commun.}, vol.~70, no.~7, pp. 4386--4399, Jul. 2022.

\bibitem{wang2024exploring}
D.~Wang, C.~Huang, L.~Liu, X.~Chen, W.~Wang, Z.~Zhang, C.~Yuen, and M.~Debbah, ``Exploring channel estimation and signal detection for {ODDM}-based {ISAC} systems,'' \emph{IEEE Wireless Lett.}, vol.~13, no.~8, pp. 2270 -- 2274, Jun. 2024.

\bibitem{cheng2024mimo}
Q.~Cheng, Z.~Shi, J.~Yuan, and H.~Lin, ``{MIMO-ODDM} signal detection: A spatial-based generative adversarial network approach,'' \emph{IEEE Trans. Wireless Commun.}, vol.~23, no.~9, pp. 12\,499--12\,512, Sep. 2024.

\bibitem{liu2022memory}
L.~Liu, S.~Huang, and B.~M. Kurkoski, ``Memory {AMP},'' \emph{IEEE Trans. Inf. Theory}, vol.~68, no.~12, pp. 8015--8039, Jun. 2022.

\bibitem{chen2023memory}
Y.~Chen, L.~Liu, Y.~Chi, Y.~Li, and Z.~Zhang, ``Memory {AMP} for generalized {MIMO:} coding principle and information-theoretic optimality,'' \emph{IEEE Trans. Wireless Commun.}, vol.~99, no.~99, pp. 1--1, Nov. 2023.

\bibitem{ge2023low}
Y.~Ge, L.~Liu, S.~Huang, D.~Gonz{\'a}lez, Y.~L. Guan, and Z.~Ding, ``Low-complexity memory {AMP} detector for high-mobility {MIMO-OTFS} {SCMA} systems,'' in \emph{Proc. IEEE Int. Conf. Commun. Workshops (ICC WKSHPS)}, Rome, Italy, Oct. 2023, pp. 807--812.

\bibitem{liu2024capacity}
L.~Liu, S.~Liang, and L.~Ping, ``On capacity optimality of {OAMP}: Beyond {IID} sensing matrices and {Gaussian} signaling,'' \emph{IEEE Trans. Commun.}, vol.~99, no.~99, pp. 1--1, 2024.

\bibitem{zhou2022channel}
G.~Zhou, C.~Pan, H.~Ren, P.~Popovski, and A.~L. Swindlehurst, ``Channel estimation for {RIS}-aided multiuser millimeter-wave systems,'' \emph{IEEE Trans. Signal Process.}, vol.~70, pp. 1478--1492, Mar. 2022.

\bibitem{fan2017angle}
D.~Fan, F.~Gao, G.~Wang, Z.~Zhong, and A.~Nallanathan, ``Angle domain signal processing-aided channel estimation for indoor 60-ghz tdd/fdd massive mimo systems,'' \emph{IEEE J. Sel. Areas Commun.}, vol.~35, no.~9, pp. 1948--1961, Sep. 2017.

\bibitem{3gpp2016evolved}
3GPP, ``Evolved universal terrestrial radio access {(E-UTRA)}; user equipment {(UE)} radio transmission and reception,'' TS 36.101, 2016.

\bibitem{lee2016channel}
J.~Lee, G.-T. Gil, and Y.~H. Lee, ``Channel estimation via orthogonal matching pursuit for hybrid {MIMO} systems in millimeter wave communications,'' \emph{IEEE Trans. Commun.}, vol.~64, no.~6, pp. 2370--2386, Jun. 2016.

\bibitem{ma2017orthogonal}
J.~Ma and L.~Ping, ``Orthogonal {AMP},'' \emph{IEEE Access}, vol.~5, pp. 2020--2033, Jan. 2017.

\end{thebibliography}
\end{document}